\documentclass{article}

\usepackage[utf8]{inputenc}
\usepackage{amsmath,amssymb,amsfonts,amsthm}
\usepackage{algorithm,algpseudocode}
\usepackage{graphicx}
\usepackage[margin=1in]{geometry}
\usepackage[authoryear]{natbib}
\usepackage{hyperref}
\usepackage{booktabs}
\usepackage{caption}
\usepackage{subcaption}
\usepackage{multirow}

\newtheorem{proposition}{Proposition}

\newtheorem{corollary}{Corollary}
\newtheorem{definition}{Definition}

\title{\textbf{Renet: Principled and Efficient Relaxation for the Elastic Net via Dynamic Objective Selection}}

\author{Albert Dorador\\
Universitat Politècnica de Catalunya\\
\texttt{albert.dorador@upc.edu}}

\begin{document}

\maketitle

\begin{abstract}
We introduce Renet, a principled generalization of the Relaxed Lasso to the Elastic Net family of estimators. While, on the one hand, $\ell_1$-regularization is a standard tool for variable selection in high-dimensional regimes and, on the other hand, the $\ell_2$ penalty provides stability and solution uniqueness through strict convexity, the standard Elastic Net nevertheless suffers from shrinkage bias that frequently yields suboptimal prediction accuracy.  We propose to address this limitation through a framework called \textit{relaxation}. Existing relaxation implementations rely on naive linear interpolations of penalized and unpenalized solutions, which ignore the non-linear geometry that characterizes the entire regularization path and risk violating the Karush-Kuhn-Tucker conditions. Renet addresses these limitations by enforcing sign consistency through an adaptive relaxation procedure that dynamically dispatches between convex blending and efficient sub-path refitting. Furthermore, we identify and formalize a unique synergy between relaxation and the ``One-Standard-Error''  rule: relaxation serves as a robust debiasing mechanism, allowing practitioners to leverage the parsimony of the 1-SE rule without the traditional loss in predictive fidelity.  Our theoretical framework incorporates automated stability safeguards for ultra-high dimensional regimes and is supported by a comprehensive benchmarking suite across 20 synthetic and real-world datasets, demonstrating that Renet consistently outperforms the standard Elastic Net and provides a more robust alternative to the Adaptive Elastic Net in high-dimensional, low signal-to-noise ratio and high-multicollinearity regimes. By leveraging an adaptive solver backend, Renet delivers these statistical gains while offering a computational profile that remains competitive with state-of-the-art coordinate descent implementations.
\end{abstract}

\section{Introduction}

\subsection{Motivation}
In high-dimensional regression problems where the number of features $p$ may exceed the number of observations $n$, the Elastic Net \citep{Zou2005} has become a standard tool. By combining the variable selection properties of the Lasso ($\ell_1$ penalty) with the stabilizing (``grouping") effect and strict convexity of Ridge ($\ell_2$ penalty), Elastic Net offers a robust alternative to Lasso, particularly in the presence of high multicollinearity.

However, both Lasso and Elastic Net suffer from a fundamental limitation: the regularization required to simultaneously suppress noise and induce sparsity shrinks the coefficients of true signal variables towards zero. This bias can lead to suboptimal prediction performance, especially when the signal-to-noise ratio is high. This issue is compounded in Elastic Net, as it imposes a ``double penalty" on coefficients \citep{Zou2005}, and the prescribed ex-post coefficient re-scaling, while principled, is not always satisfactory.

While the Adaptive Elastic Net (AEN) \citep{Zou2009} attempts to mitigate this bias by utilizing data-driven weights, its performance relies directly on the quality of an initial estimator---a task that is notoriously difficult and prone to weight propagation error with finite samples in the $n < p$ regime or low signal-to-noise environments. Although AEN possesses the ``Oracle property'' under regularity conditions, this theoretical guarantee can fail in finite samples where the initial estimator is unreliable. Instead of focusing on coefficient-specific debiasing through potentially noisy weights, the framework proposed in this work, known as \textit{relaxation}, adopts a more robust global debiasing strategy. By decoupling the variable selection process from the final coefficient estimation, relaxation allows for a sparse model that retains the predictive accuracy of unconstrained solutions without sacrificing the sparsity and numerical stability provided by the $\ell_1$ and $\ell_2$ penalties, respectively.

Furthermore, while AEN provides a single point-estimate correction based on a fixed weighting scheme that can be satisfactory in well-behaved regimes, it lacks a mechanism to explore the bias-variance tradeoff within a fixed support. In contrast, introducing a relaxation parameter $\theta$ allows for a continuous navigation of the estimation space.  For each candidate sparse model, a Relaxed Elastic Net can reduce shrinkage bias along a continuous path. This allows the estimator to reach optimal risk profiles that are structurally inaccessible to AEN, which must rely on the accuracy of its local weights and is unable to ``re-tune'' the shrinkage intensity without altering the weighted penalty structure itself.

On the other hand, when defined ($n \leq p$), unpenalized Ordinary Least Squares (OLS) linear regression, while not suffering from coefficient bias under standard assumptions, often exhibits a higher mean squared prediction error than Elastic Net due to its increased variance, besides providing zero sparsity. If stepwise variable selection is used in an attempt to obtain a sparse solution, the estimated coefficients are no longer unbiased but tend to be overestimated instead. Furthermore, stepwise selection is a greedy heuristic that scales poorly with dimensionality, as it requires $\mathcal{O}(p^2)$ OLS fits.

Allowing the Elastic Net to optionally debias coefficient estimates---whether partially or in full---emerges as a promising middle ground between the standard, often over-regularized Elastic Net and the less stable or computationally inefficient stepwise OLS linear regression. This approach, which we term Renet, utilizes a framework that is overall more robust in finite samples than that of AEN, providing a self-correcting path that balances parsimony with predictive fidelity even in the absence of the  regularity conditions that are required for asymptotically valid individual coefficient debiasing.

\subsection{The Relaxation Problem}
To mitigate the estimation bias in $\ell_1$-regularized linear regression, \cite{Meinshausen2007} introduced the Relaxed Lasso, which applies the variable selection of Lasso but re-estimates the coefficients with a reduced penalty. 

However, as successful as the Lasso has been in many situations, it has a few well-known design flaws that may hinder performance:
\begin{itemize}
\item Saturation: in the $n < p$ regime, the Lasso is only able to select at most $n$ variables, which imposes a hard constraint that is driven by the mathematical construction of the estimator, not by the data
\item Spurious selection in groups of highly correlated predictors, which undermines model stability and may compromise interpretation
\item Under high multicollinearity, the Ridge estimator often exhibits superior performance \citep{Tibshirani1996}
\end{itemize}

All of these shortcomings are theoretically solved  by introducing an $\ell_2$ penalty term, giving rise to the Elastic Net estimator (although it does introduce a complication in a relaxation framework, see Section \ref{sec:stability}). Given a dataset ($\mathbf{X}, \mathbf{y}$) where $\mathbf{X} \in \mathbb{R}^{n \times p}$ and $\mathbf{y} \in \mathbb{R}^{n \times 1}$, the Elastic Net estimator $\hat{\beta}$ is defined as the minimizer of the following objective function:
\begin{equation} \label{eq:EN}
    \mathcal{L}(\beta; \lambda, \alpha) = \frac{1}{2n} \| \mathbf{y} - \mathbf{X}\beta \|_2^2 + \lambda \left( \alpha \| \beta \|_1 + \frac{1-\alpha}{2} \| \beta \|_2^2 \right)
\end{equation}
where $\lambda \geq 0$ controls the overall strength of the penalty, and $\alpha \in [0, 1]$ governs the balance between the Lasso and Ridge components.

While relaxation is theoretically sound, there are not many software implementations of Relaxed Lasso, let alone Relaxed Elastic Net.  The R package \texttt{glmnet} \citep{Friedman2010} provides one of the few implementations of the Relaxed Elastic Net publicly available -- perhaps the only one. However, based on its documentation, it does not follow the relaxation algorithm in \cite{Meinshausen2007} but instead implements relaxation as a simple linear interpolation between the Elastic Net solution and the OLS solution on the active set:
\begin{equation}
    \hat{\beta}_{naive} = \theta \hat{\beta}_{EN} + (1-\theta) \hat{\beta}_{OLS}
\end{equation}
This approach is computationally inexpensive but mathematically compromised when the regularization path is non-linear,  and, specifically, when relaxing the penalty causes variables to change sign or re-enter the active set. This approach, thus, may yield a model that never actually satisfied the Karush-Kuhn-Tucker (KKT) conditions of the original objective function at any $\lambda \geq 0$ (unless one of the extreme points is chosen). As already warned in \cite{Tibshirani1996} and later proved formally (see e.g. \cite{Dorador2025}), the size of the estimated regression coefficients in Lasso is in general not a monotonic function of the penalty parameter,  as it depends on the correlation structure present in the design matrix. This means that the linearity violation in the regularization path is far from an edge case. In addition, under $n < p$, the OLS solution is not unique and highly unstable, which suggests that mixing it with the Elastic Net solution as proposed by \cite{Friedman2010} can be dangerous.  Indeed, while in practice this simplified approach can be relatively harmless in the case of Relaxed Lasso, this is not the case for the Relaxed Elastic Net with a nonzero $\ell_2$ penalty -- as alluded earlier -- which will be discussed in greater detail in Section \ref{sec:stability}.

Having said that, under coefficient sign consistency between the penalized and unpenalized estimates (which is not uncommon in well-behaved regimes), our proposed relaxation implementation does coincide with the approach followed in \texttt{glmnet}. In more general scenarios where sign consistency does not hold, our approach provides the necessary mathematical safeguards to ensure KKT compliance.

Since our implementation generalizes the approach followed in \texttt{glmnet}, we have excluded  the latter from our benchmarks. This has the additional benefit of allowing a pure Python benchmarking, which avoids spurious environment-related differences that may confound results.

\subsection{Our Contribution: Renet}
\label{sec:contrib}
We propose Renet, a principled, strictly convex, stability-enhanced implementation of the Relaxed Elastic Net. Our main contributions are four-fold:
\begin{enumerate}
    \item \textbf{Generalization and convexity:} We generalize Meinshausen's framework to the Elastic Net, ensuring strict convexity of the objective function by introducing an $\ell_2$ penalty. This guarantees uniqueness of the solution, resolving identifiability issues inherent in pure Relaxed Lasso.
    \item \textbf{Principled objective selection and stability:} Unlike naive implementations, Renet dynamically switches between an unpenalized objective that allows for fast analytical interpolation (when coefficient signs are stable) and sub-path refitting (when geometry dictates), remaining stable even when $n \ll p$ by virtue of our theoretically-grounded complexity-adjusted relaxation floor. Whenever $n < p_{active}$ (due to the effect of the $\ell_2$ penalty),  relaxation is avoided entirely to preserve stability.
    \item \textbf{Solver-agnostic:} While Meinshausen's ``Refined Algorithm" requires a LARS \citep{Efron2004} implementation, Renet does not: the gradients from a LARS step are no longer necessary. This means Renet is compatible with any general-purpose optimizer like e.g. coordinate descent, widening its applicability and allowing to tackle the $n < p$ regime more efficiently, as we will discuss in Section \ref{sec:Agnost}.
    \item \textbf{Synergy with the ``One-Standard-Error'' (1-SE) rule:} We provide a theoretical rationale for why relaxation uniquely complements the well-known 1-SE rule, keeping its parsimony-inducing properties while greatly mitigating the usual accuracy-parsimony tradeoff by correcting the shrinkage bias inherent in over-regularized models.
\end{enumerate}

\noindent The remainder of this article is organized as follows. Section \ref{sec:Renet} details the Renet algorithm, discussing its solver-agnostic nature and stability safeguards in high-dimensional settings. Section \ref{sec:Theory} explores the theoretical properties of the estimator, including its grouping effects, bias-mitigation factors in orthogonal designs, as well as a formal justification for the synergy between relaxation and the 1-SE rule. Section \ref{sec:Empirics} presents our empirical evaluation on synthetic and real-world benchmarks, with concluding remarks in Section \ref{sec:conclusion}.

\section{The Renet Algorithm}
\label{sec:Renet}

The Renet estimator defines a two-stage procedure designed to decouple the variable selection process from the coefficient estimation. Let $\mathbf{X} \in \mathbb{R}^{n \times p}$ be the design matrix and $\mathbf{y} \in \mathbb{R}^{n \times 1}$ the response vector.

\paragraph{Stage 1: Feature Selection}
In the first stage, we identify a candidate active set $\mathcal{A}_{\lambda}$ by solving the standard Elastic Net objective:
\begin{equation} \label{eq:EN_Selection}
    \hat{\beta}_{EN}(\lambda, \alpha) = \underset{\beta \in \mathbb{R}^{p}}{\text{arg min}} \left\{ \frac{1}{2n} \| \mathbf{y} - \mathbf{X}\beta \|_2^2 + \lambda \left( \alpha \| \beta \|_1 + \frac{1-\alpha}{2} \| \beta \|_2^2 \right) \right\}
\end{equation}
The active set is defined as the indices of the non-zero coefficients: $\mathcal{A}_{\lambda} = \{j : \hat{\beta}_{EN,j} \neq 0\}$.

\paragraph{Stage 2: Adaptive Relaxation}
In the second stage, we refit the model using only the features indexed by $\mathcal{A}_{\lambda}$. To mitigate the shrinkage bias inherent in the initial selection, we solve a restricted objective with a scaled penalty $\theta \lambda$ (the Renet objective), as we show next.

\begin{definition}[Renet Estimator] \label{def:Renet}
Let $\mathbf{X}_{\mathcal{A}_{\lambda}}$ be the sub-matrix of $\mathbf{X}$ containing only columns indexed by the active set. For a relaxation parameter $\theta \in (0, 1]$, the Renet estimator is defined as:
\begin{equation} \label{eq:RenetSol}
    \hat{\beta}_{Renet}(\lambda, \alpha, \theta) = \underset{\beta \in \mathbb{R}^{|\mathcal{A}_{\lambda}|}}{\text{arg min}} \left\{ \frac{1}{2n} \| \mathbf{y} - \mathbf{X}_{\mathcal{A}_{\lambda}}\beta \|_2^2 + \theta \lambda \left( \alpha \| \beta \|_1 + \frac{1-\alpha}{2} \| \beta \|_2^2 \right) \right\}
\end{equation}
\end{definition}

\noindent A Python implementation of Renet is included in the \texttt{trust-free} package (version $\geq$ 3.0.0), both as a standalone estimator (\texttt{Renet} class) and as an estimator in the leaves of a TRUST \citep{Dorador2026TRUST} tree (within the \texttt{TRUSTRegressor} class).  Pseudocode is provided in Appendix \ref{ap:Algo}.

\subsection{Solver-Agnosticism}
\label{sec:Agnost}

Renet implements a solver-agnostic variation of Meinshausen's ``Refined Algorithm''.  While \cite{Meinshausen2007} relies on differential path updates specific to the LARS algorithm  to predict sign violations, Renet explicitly computes the unregularized (or, for stability,  minimally $\ell_2$-penalized) endpoint on the active set. If the sign consistency condition holds, i.e., $\text{sign}(\hat{\beta}_{EN}) = \text{sign}(\hat{\beta}_{OLS})$, the solution lies on a linear segment and Renet utilizes efficient convex blending, which is mathematically equivalent to Meinshausen's linear interpolation in the sign-consistent regime. If signs conflict (indicating a variable crossing zero), the linearity assumption held by naive relaxation implementations fails and Renet correctly reverts to sub-path refitting, ensuring the solution respects the non-linear geometry of the KKT conditions even when the regularization path is non-linear. 

Beyond wider compatibility, making Renet model-agnostic has profound implications, especially in the high-dimensional regime, as we discuss next.

\subsection{Computational Advantage over LARS-based Implementations}

The original Elastic Net formulation \citep{Zou2005} suggests a LARS-based approach which requires an augmented design matrix $\mathbf{X}^* \in \mathbb{R}^{(n+p) \times p}$. This dimensionality expansion is inexpensive for small problems, but, as the authors themselves acknowledge,  can be computationally prohibitive in the $n \ll p$ regime. Renet sidesteps this entirely. By utilizing the \textit{Celer} solver \citep{Massias2018}, Renet leverages duality gaps to prune the feature space before optimization begins,  which is an efficient strategy in the $n \ll p$ regime. Furthermore, since our algorithm warm-starts the sub-path refit using the initial Elastic Net solution, the second optimization often converges in a fraction of the time required for a cold-start, making the total cost of relaxation modest-to-negligible compared to the initial path computation.

\subsection{Stability Safeguards and the Consensus on the $n \ll p$ Regime}
\label{sec:stability}
In the original Relaxed Lasso \citep{Meinshausen2007}, the $n < p$ regime is implicitly managed by the saturation property of the Lasso, which constrains the active set cardinality to $p_{active} \equiv |\mathcal{A}_{\lambda}| \le n$. Consequently, the relaxation stage in a pure Lasso framework always reduces to a well-defined Least Squares problem. 

However, because the initial Elastic Net estimator can select more than $n$ variables when features are highly correlated if $\alpha < 1$,  Renet may encounter active sets downstream where $n < p_{active}$. In such cases, full relaxation ($\theta \to 0$) would target a non-unique (or highly unstable minimum-norm) solution. 

For that reason,  practical implementations like \texttt{glmnet} \citep{Friedman2010}  have long signaled caution,  attempting to manage this instability through two primary heuristics: \texttt{lambda.min.ratio}, which effectively prevents the estimator from reaching the low-bias, high-variance region near OLS, and the \texttt{maxp} parameter, which imposes a hard limit on the number of relaxed coefficients (defaulting to $n-3$).

We argue that these constraints are somewhat arbitrary and can interfere with the selection process. Renet instead allows the initial screening to explore the entire $\lambda$ spectrum and any number of features $p_{active}$, but then puts two safety measures in place:
\begin{enumerate}
    \item \textbf{No relaxation under saturation:} If the model reaches saturation, i.e., $p_{active} > n$,  relaxation becomes unreliable and is thus avoided by setting $\theta \equiv 1$. This ensures the estimator defaults to the stable penalized solution when the restricted design matrix is not full rank. This, in turn, encourages further sparsity under $n < p$, because only unsaturated models are allowed to debias their coefficients. 
    
    \item \textbf{Complexity-adjusted relaxation floor:} While the original Relaxed Lasso \citep{Meinshausen2007} allows for full relaxation even in ultra high-dimensional regimes, its theoretical guarantees are asymptotic.  To enhance stability (Section \ref{sec:grouping}) and  prevent over-relaxation under high search complexity with finite samples, we define a data-driven lower bound for $\theta$ that applies specifically to the $n \ll p$ regime:
    \begin{equation}
        \theta_{min} = \text{min} \left( 1.0,  \frac{\log p}{2\sqrt{n}} \right)
    \end{equation}
    Observe that this constraint, which can be viewed as a selection bias penalty, vanishes at a rate that is consistent with Assumption 2 in \cite{Meinshausen2007}.
\end{enumerate}

Both these structural constraints create a safe relaxation path that preserves the conditioning benefits of the initial penalty while allowing for significant bias correction even in ultra high-dimensional settings.

\section{Theoretical Properties}
\label{sec:Theory}

\subsection{Generalization of Relaxed Lasso and Elastic Net}
Renet generalizes both the Relaxed Lasso (Relasso) and the ordinary Elastic Net (Enet): setting $\alpha=1$ recovers the Relaxed Lasso of \cite{Meinshausen2007}, while setting $\theta=1$ yields the standard Elastic Net. A critical theoretical advantage of Renet is the inclusion of the $\ell_2$ term ($\alpha < 1$), which imparts strict convexity to the objective function in Equation \ref{eq:EN}. This ensures that for any $\lambda > 0$, the Hessian is positive definite, and the solution $\hat{\beta}$ is unique, avoiding the non-uniqueness pitfalls of Lasso even in the $n <p$ regime.

Let $\mathcal{H}_{Model}$ denote the hypothesis space of a given model class, i.e. the set of all possible parameter estimates reachable by a given family of estimators. The following inclusion chains hold:
\begin{equation}
\mathcal{H}_{Renet} \supseteq \mathcal{H}_{Relasso} \supseteq \mathcal{H}_{Lasso} \supseteq \mathcal{H}_{OLS}
\end{equation}
\begin{equation}
\mathcal{H}_{Renet} \supseteq \mathcal{H}_{Enet} \supseteq \{ \mathcal{H}_{Lasso} \cup \mathcal{H}_{Ridge} \} \supseteq \mathcal{H}_{OLS}
\end{equation}
Consequently, Renet provides a more flexible framework for navigating the bias-variance tradeoff. Provided that the hyperparameter triplet $(\lambda, \alpha, \theta)$ is tuned effectively\footnote{In our subsequent experiments, we fix $\alpha = 0.95$ both to reflect our preference for sparsity and to simplify an already rich experimental setup.} (e.g., via $k$-fold cross-validation), Renet is theoretically guaranteed to perform at least as well as its predecessors. As we will demonstrate in Section \ref{sec:Empirics}, the ability to decouple variable selection (via $\lambda$) from shrinkage intensity (via $\theta$) allows Renet to simultaneously achieve comparable or higher out-of-sample accuracy and sparsity across a broad range of scenarios. The inclusion of a 1-SE rule tends to further improve the accuracy-sparsity combination that Renet provides. This is not a coincidence, as we will formally justify in Section \ref{sec:1SE}.

Beyond this geometric inclusion, Renet offers a powerful Bayesian interpretation that further justifies its two-dimensional parameter space. As noted by \cite{Zou2005}, the Lasso and Ridge penalties correspond to independent Laplace and Gaussian priors on the coefficients, respectively. Standard Elastic Net is forced to use a single parameter $\lambda$ to simultaneously regulate both the sparsity of the support and the intensity of the prior.  This creates a structural conflict: it is mathematically impossible for the standard Elastic Net to represent a model with a very small support and a non-informative prior, or conversely, a dense model with a highly informative prior.

Renet resolves this by decoupling the prior's support from its intensity. In this framework, $\lambda$ acts as a threshold that determines the prior's \textit{support} $\mathcal{A}_{\lambda}$---effectively defining the subspace where the prior has finite variance. Simultaneously, $\theta$ modulates the relative \textit{certainty} of the prior on that subspace. 

Crucially, the Renet objective (Eq. \ref{eq:RenetSol}) does not represent a sequential decision process, but a joint prior configuration. By exploring the $(\lambda, \theta)$ space via cross-validation, Renet identifies the optimal balance between the selection pressure ($\lambda$) and the estimation bias ($\theta$). As $\theta \to 0$, we transition toward a non-informative prior specifically for the active set $\mathcal{A}_{\lambda}$, while maintaining an infinitely sharp (Dirac delta) prior for all variables outside the set. This decoupling allows Renet to leverage the high precision required for consistent selection without being forced to accept the resulting shrinkage bias in the final estimates.

\subsection{Renet as a Stabilized Lasso}
In this section, we show in full generality that Renet can be viewed as a stabilized version of the Lasso (and Relaxed Lasso). By adapting Theorem 2 of \cite{Zou2005}, we demonstrate that the Renet sub-problem is equivalent to a (Relaxed) Lasso problem solved on a modified, more stable geometry.

\begin{proposition}[Lasso Stabilization] \label{prop:LassoStab}
Given a fixed hyperparameter triplet $(\lambda, \alpha, \theta)$ and an active set $\mathcal{A}_{\lambda}$, the Renet estimates $\hat{\beta}_{Renet}$ are the minimizers of:
\begin{equation}
    J(\beta) = \beta^T \left( \frac{\mathbf{X}^T\mathbf{X} + \theta \lambda (1-\alpha)\mathbf{I}}{1 + \theta \lambda (1-\alpha)} \right) \beta - 2\mathbf{y}^T\mathbf{X}\beta + \theta \lambda \alpha \|\beta\|_1
\end{equation}
\end{proposition}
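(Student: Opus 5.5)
The plan is to adapt the change of variables behind Theorem 2 of \cite{Zou2005} to the restricted Renet sub-problem in Definition \ref{def:Renet}. Throughout, $\mathbf{X}$ stands for $\mathbf{X}_{\mathcal{A}_{\lambda}}$ and $\beta\in\mathbb{R}^{|\mathcal{A}_{\lambda}|}$. First I would put the Renet objective (Eq. \ref{eq:RenetSol}) into the Zou--Hastie ``naive'' form
\begin{equation*}
L(\beta)=\|\mathbf{y}-\mathbf{X}\beta\|_2^2+\lambda_2\|\beta\|_2^2+\lambda_1\|\beta\|_1 ,
\end{equation*}
identifying $\lambda_1=\theta\lambda\alpha$ and $\lambda_2=\theta\lambda(1-\alpha)$. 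This needs care with constants. Multiplying Eq. \ref{eq:RenetSol} by $2n$ does not change the minimizer, but it produces $\lambda_1=2n\theta\lambda\alpha$ and $\lambda_2=n\theta\lambda(1-\alpha)$. So I would state explicitly the convention under which the proposition reads literally: the factors $n$ and $2$ are absorbed into the penalty parameters, equivalently the columns are standardized as in \cite{Zou2005}. With that convention, expanding the square and dropping the constant $\mathbf{y}^T\mathbf{y}$ gives
\begin{equation*}
L(\beta)=\beta^T(\mathbf{X}^T\mathbf{X}+\lambda_2\mathbf{I})\beta-2\mathbf{y}^T\mathbf{X}\beta+\lambda_1\|\beta\|_1 .
\end{equation*}

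The key step is the reparametrization $\beta=\gamma/(1+\lambda_2)$. Substituting gives
\begin{equation*}
L\!\left(\tfrac{\gamma}{1+\lambda_2}\right)=\frac{1}{1+\lambda_2}\left[\gamma^T\frac{\mathbf{X}^T\mathbf{X}+\lambda_2\mathbf{I}}{1+\lambda_2}\gamma-2\mathbf{y}^T\mathbf{X}\gamma+\lambda_1\|\gamma\|_1\right]=\frac{J(\gamma)}{1+\lambda_2}.
\end{equation*}
The map $\gamma\mapsto\gamma/(1+\lambda_2)$ is a bijection and $1+\lambda_2>0$, so $\gamma$ minimizes $J$ if and only if $\gamma/(1+\lambda_2)$ minimizes $L$. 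The minimizer is unique because the quadratic form $(\mathbf{X}^T\mathbf{X}+\lambda_2\mathbf{I})/(1+\lambda_2)$ is positive definite for $\alpha<1$ and $\theta\lambda>0$. Hence $\arg\min J=(1+\theta\lambda(1-\alpha))\,\hat{\beta}^{naive}$.

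The main obstacle is interpretive rather than computational. The identity shows that the minimizer of $J$ is the Zou--Hastie \emph{rescaled} (``corrected'') version of the raw minimizer of Eq. \ref{eq:RenetSol}, not that raw minimizer itself. I would therefore state explicitly that $\hat{\beta}_{Renet}$ here denotes the estimate after the $(1+\lambda_2)$ rescaling prescribed in \cite{Zou2005}, which is applied in the relaxed sub-problem with $\lambda_2=\theta\lambda(1-\alpha)$. I would also note the two degenerate cases. When $\alpha=1$ the rescaling is trivial and $J$ reduces exactly to the Relaxed Lasso objective. When $\theta\to0$ the matrix $(\mathbf{X}^T\mathbf{X}+\lambda_2\mathbf{I})/(1+\lambda_2)$ tends to $\mathbf{X}^T\mathbf{X}$. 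In all cases $J$ is a Lasso problem on the better-conditioned Gram matrix $(\mathbf{X}^T\mathbf{X}+\lambda_2\mathbf{I})/(1+\lambda_2)$, which gives the ``stabilized Lasso'' reading.
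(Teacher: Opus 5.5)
Your proposal is correct and takes the same route as the paper, whose proof just substitutes $\lambda_1=\theta\lambda\alpha$ and $\lambda_2=\theta\lambda(1-\alpha)$ into Theorem~2 of \cite{Zou2005}; your reparametrization $\beta=\gamma/(1+\lambda_2)$ is that theorem's content written out explicitly. Your caveats are right and are passed over in the paper's brief proof: $\arg\min J$ is the $(1+\theta\lambda(1-\alpha))$-rescaled estimate rather than the raw minimizer of Eq.~(\ref{eq:RenetSol}), and the constants match only after reparametrizing the penalties---though your aside that standardizing the columns is ``equivalent'' is not quite right, since for any scaling of $\mathbf{X}$, multiplying Eq.~(\ref{eq:RenetSol}) by $2n$ still gives $\lambda_1=2n\theta\lambda\alpha$ and $\lambda_2=n\theta\lambda(1-\alpha)$.
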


\begin{proof}
The result is a straightforward adaptation of Theorem 2 in \cite{Zou2005}. By substituting the effective Renet penalties $\lambda_{1, \text{relax}} = \theta \lambda \alpha$ and $\lambda_{2, \text{relax}} = \theta \lambda (1-\alpha)$ into the first-order optimality conditions in the Elastic Net and rearranging the quadratic form, we obtain the stated result.
\end{proof}

Similar to the original Elastic Net, Proposition \ref{prop:LassoStab} shows that Renet acts as a (Relaxed) Lasso estimator where the design matrix $\mathbf{X}^T\mathbf{X}$ is ``shrunk'' toward the identity matrix by the $\ell_2$ penalty. However, unlike the standard Elastic Net (where $\theta=1$), Renet explicitly controls the degree of this deformation and the intensity of the $\ell_1$ penalty via the relaxation parameter $\theta$. 

This has a direct advantage. The introduction of the relaxation parameter $\theta$ allows Renet to navigate the bias-variance manifold more effectively than the standard Elastic Net by conducting a joint two-dimensional optimization over the $(\lambda, \theta)$ surface. This ensures that the selection of the regularization intensity $\lambda$ is informed by the debiasing potential of the relaxation parameter $\theta$. This joint optimization identifies optimal models that are structurally inaccessible to the standard Elastic Net. Specifically, this approach allows for the selection of a higher-penalty $\lambda$ to ensure a clean, parsimonious active set, while simultaneously utilizing a lower $\theta$ to recover the signal magnitude lost to shrinkage. Therefore, Renet does not merely ``fix" a biased model (as it would be the case with a naive sequential ``$\lambda$-then-$\theta$" optimization),  it identifies the optimal coordinate on the accuracy-sparsity frontier by treating relaxation as an integral dimension of the model-selection process.

Concretely, as $\theta \to 0$, the quadratic term in the parenthesis recovers the standard OLS geometry $\mathbf{X}^T\mathbf{X}$ and the $\ell_1$ penalty vanishes. This formally proves that Renet navigates the space between the stable, well-conditioned geometry of the Elastic Net and the unbiased but potentially volatile geometry of the OLS. In this framework, $\theta$ serves as a continuous dial for geometric transformation, allowing for a ``relaxed'' fit that remains strictly convex as long as $\theta \lambda (1-\alpha) > 0$.

\subsection{The Renet Grouping Property}
\label{sec:grouping}
A known limitation of the standard Relaxed Lasso is that by removing the penalty (relaxing to OLS), one forfeits the stability benefits gained during the selection phase. Specifically, in the presence of high multicollinearity, OLS coefficients are known to exhibit high variance and poor grouping behavior \citep{Zou2005}. Renet, however, maintains a scaled $\ell_2$ penalty during the relaxation phase. We can thus adapt the ``Grouping Effect'' inequality from \cite{Zou2005} to add further theoretical justification for the complexity-adjusted relaxation floor that we introduced in Section \ref{sec:stability}.

\begin{proposition}[Renet Grouping Effect]
\label{prop:grouping}
Given the Renet hyperparameters $\lambda, \alpha, \theta$, let $\hat{\beta}(\theta)$ be the solution to the Renet sub-problem on the active set. For any pair of (standardized) highly correlated variables $i, j$ in the active set with sample correlation $\rho = \mathbf{x}_i^T \mathbf{x}_j \approx 1$, the difference in their coefficient estimates satisfies:
\begin{equation} \label{eq:grouping}
    |\hat{\beta}_i(\theta) - \hat{\beta}_j(\theta)| \le \frac{1}{\theta \lambda (1-\alpha)} \|\mathbf{y}\|_2 \sqrt{2(1-\rho)}
\end{equation}
\end{proposition}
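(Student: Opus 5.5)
The plan is to follow the proof of Theorem 1 in \cite{Zou2005}, working directly with the first-order (KKT) conditions of the Renet objective in Definition \ref{def:Renet}, with effective penalties $\lambda_1=\theta\lambda\alpha$ and $\lambda_2=\theta\lambda(1-\alpha)$. As in Zou and Hastie, I will assume that $\hat{\beta}_i(\theta)$ and $\hat{\beta}_j(\theta)$ are both nonzero and share the same sign; when $\rho\approx 1$ this is the relevant case. The standardization convention will be $\mathbf{x}_i^T\mathbf{x}_i=\mathbf{x}_j^T\mathbf{x}_j=1$, so that $\rho=\mathbf{x}_i^T\mathbf{x}_j$. Strict convexity gives a unique minimizer, because $\theta\lambda(1-\alpha)>0$. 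Since $|\beta_k|$ is differentiable at any nonzero coordinate, stationarity in coordinates $i$ and $j$ reads
\begin{equation*}
-\frac{1}{n}\mathbf{x}_k^T\hat{\mathbf{r}} + \theta\lambda\alpha\,\mathrm{sign}(\hat{\beta}_k) + \theta\lambda(1-\alpha)\hat{\beta}_k = 0,\qquad k\in\{i,j\},
\end{equation*}
where $\hat{\mathbf{r}}=\mathbf{y}-\mathbf{X}_{\mathcal{A}_\lambda}\hat{\beta}(\theta)$.

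First, I will subtract the two equations. Because the signs agree, the $\ell_1$ subgradient terms cancel, which leaves
\begin{equation*}
\theta\lambda(1-\alpha)\,\bigl(\hat{\beta}_i-\hat{\beta}_j\bigr)=\frac{1}{n}(\mathbf{x}_i-\mathbf{x}_j)^T\hat{\mathbf{r}}.
\end{equation*}
Cauchy--Schwarz then gives $|\hat{\beta}_i-\hat{\beta}_j|\le \frac{1}{n\theta\lambda(1-\alpha)}\|\mathbf{x}_i-\mathbf{x}_j\|_2\|\hat{\mathbf{r}}\|_2$. The standardization yields $\|\mathbf{x}_i-\mathbf{x}_j\|_2^2=2-2\rho=2(1-\rho)$.

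Second, I will bound the residual norm by comparing objective values. Since $\hat{\beta}(\theta)$ minimizes the Renet objective and the penalty is nonnegative, we have $\frac{1}{2n}\|\hat{\mathbf{r}}\|_2^2\le \text{objective}(\hat{\beta})\le \text{objective}(0)=\frac{1}{2n}\|\mathbf{y}\|_2^2$, hence $\|\hat{\mathbf{r}}\|_2\le\|\mathbf{y}\|_2$. Combining the two steps gives
\begin{equation*}
|\hat{\beta}_i-\hat{\beta}_j|\le \frac{1}{n\,\theta\lambda(1-\alpha)}\|\mathbf{y}\|_2\sqrt{2(1-\rho)}.
\end{equation*}
Since $n\ge 1$, this implies \eqref{eq:grouping}. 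The extra $1/n$ is an artifact of the $\frac{1}{2n}$ scaling in the loss, and the stated bound holds a fortiori.

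The main obstacle is the sign and support hypothesis. If one of the two coefficients is zero, or if their signs differ, the $\ell_1$ terms no longer cancel. For differing signs they contribute $2\theta\lambda\alpha$ with the same sign as the difference, so the bound still holds for nonnegative differences after a one-sided argument, but a clean statement is easier under Zou--Hastie's assumption $\hat{\beta}_i\hat{\beta}_j>0$. I would state that assumption explicitly, noting that it is natural when $\rho\approx1$. The bound then makes explicit the role of the relaxation floor $\theta_{\min}$ of Section \ref{sec:stability}: the bound blows up as $\theta\to0$, and keeping $\theta$ away from zero keeps the grouping guarantee nontrivial.
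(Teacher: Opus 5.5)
Your proposal is correct and is the same argument the paper invokes: the proof of Zou and Hastie's Theorem 1, in which you subtract the KKT equations, apply Cauchy--Schwarz, and bound $\|\hat{\mathbf{r}}\|_2\le\|\mathbf{y}\|_2$ by comparing with $\beta=0$. You carry it out explicitly, and this correctly exposes that, because the Renet loss carries the factor $\frac{1}{2n}$, the faithful substitution into Zou--Hastie is $\lambda_2=n\theta\lambda(1-\alpha)$ rather than the paper's $\theta\lambda(1-\alpha)$, so the stated bound is valid but loose by a factor $n$. The same-sign hypothesis you adopt, which the paper also inherits silently from Zou--Hastie, is in fact unnecessary: writing stationarity with subgradients $s_k\in\partial|\hat{\beta}_k|$ gives $\frac{1}{n}(\mathbf{x}_i-\mathbf{x}_j)^T\hat{\mathbf{r}}=\theta\lambda\alpha(s_i-s_j)+\theta\lambda(1-\alpha)(\hat{\beta}_i-\hat{\beta}_j)$, and monotonicity of $\partial|\cdot|$ gives $(s_i-s_j)(\hat{\beta}_i-\hat{\beta}_j)\ge 0$, so the two terms on the right never cancel and $\theta\lambda(1-\alpha)|\hat{\beta}_i-\hat{\beta}_j|\le\frac{1}{n}|(\mathbf{x}_i-\mathbf{x}_j)^T\hat{\mathbf{r}}|$ holds for every pair (including zero coefficients), matching the statement as written.
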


\begin{proof}
The result follows directly from Theorem 1 in \cite{Zou2005} by substituting the effective regularization strength of the Renet sub-problem. In the relaxation phase, the effective penalty parameters are $\lambda_{relax} = \theta \lambda$. The Ridge component strength is thus $\lambda_2 = \theta \lambda (1-\alpha)$. Substituting this denominator into the original bound yields the stated result.
\end{proof}

An important remark is in order. Equation \ref{eq:grouping} reveals a strong dependency on $\theta$. For fixed $\lambda >0$ and $\alpha < 1$,  as $\theta \to 0$ (approaching the pure OLS solution), the upper bound on the coefficient estimate difference tends to infinity. This shows that full relaxation erodes the grouping structure, allowing arbitrarily large divergences between the estimates of correlated predictors.  Therefore,  this bound highlights a critical flaw in naive relaxation approaches that can only interpolate towards the pure OLS solution.  While Renet does utilize interpolation for computational speed, it does so \textit{conditionally}: only when sign consistency confirms the stability of the path; otherwise, Renet performs a warm-started sub-path refit.  In the ultra high-dimensional regime with finite $n \ll p$, the sub-path refit relaxation is constrained to $\theta \gg 0$ as discussed in Section \ref{sec:stability}, which guarantees that the bound remains tight (provided $\lambda >0$ and $\alpha < 1$), strictly enforcing the grouping effect even in the relaxed phase. 

A similar conclusion holds true when we let $\alpha \to 1$ (Relaxed Lasso limit),  highlighting a clear theoretical advantage of keeping $\alpha < 1$ as opposed to the Relaxed Lasso formulation, which implies $\alpha = 1$.

\subsection{Bias Reduction in Orthogonal Design}
The standard Elastic Net  introduces a well-documented ``double shrinkage'' bias \citep{Zou2005}, which the authors address via a global rescaling factor $1 + \lambda(1-\alpha)$, justified by arguments ranging from ridge operator decomposition to minimax optimality in specific settings. While this rescaling is theoretically principled, it is arguably rigid: it applies a fixed multiplicative correction determined solely by the penalty parameters, effectively assuming that fully reversing the ridge contraction while leaving the Lasso penalty intact is necessarily the optimal strategy for prediction. On the other hand, Renet addresses \textit{both} shrinkage biases at once with a data-driven approach. To explicitly quantify the debiasing effect of Renet, we consider the simplified case of an orthogonal design matrix. In this setting, the Renet solution has a closed-form expression, allowing for a direct analytical comparison of the shrinkage effects.

\begin{proposition}[Bias Mitigation Factor] \label{prop:Bias}
Assume an orthogonal design such that $\mathbf{X}^T\mathbf{X} = n\mathbf{I}$, implying (without loss of generality) the predictors are standardized such that $\mathbf{x_j}^T\mathbf{x_j} = n$ for all $j$. Let $\hat{\beta}_{OLS} = \frac{1}{n}\mathbf{X}^T\mathbf{y}$ be the unconstrained OLS estimator. For a given hyperparameter triplet $(\lambda, \alpha, \theta)$, the Renet estimator $\hat{\beta}_{Renet}$ is given by:
\begin{equation}
    \hat{\beta}_{Renet_j} = \frac{(|\hat{\beta}_{OLS_j}| - \theta \lambda \alpha)_+}{1 + \theta \lambda (1-\alpha)} \text{sgn}(\hat{\beta}_{OLS_j})
\end{equation}
where $(z)_+ = \max(z, 0)$.
\end{proposition}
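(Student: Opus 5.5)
The plan is to use the orthogonality assumption to reduce the Renet objective in Definition~\ref{def:Renet} to a sum of independent one-dimensional problems, and then solve each by subgradient calculus. Throughout I take the active set $\mathcal{A}_\lambda$ as fixed and work on the sub-design $\mathbf{X}_{\mathcal{A}_\lambda}$. Since its columns are a subset of orthogonal columns, $\mathbf{X}_{\mathcal{A}_\lambda}^T\mathbf{X}_{\mathcal{A}_\lambda} = n\mathbf{I}$ still holds. Coordinates outside the active set are set to zero by construction.

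First I expand the squared loss:
\begin{equation*}
\frac{1}{2n}\|\mathbf{y}-\mathbf{X}_{\mathcal{A}_\lambda}\beta\|_2^2 = \frac{1}{2n}\|\mathbf{y}\|_2^2 - \frac{1}{n}\mathbf{y}^T\mathbf{X}_{\mathcal{A}_\lambda}\beta + \frac{1}{2}\|\beta\|_2^2 .
\end{equation*}
Using $\hat{\beta}_{OLS}=\frac1n\mathbf{X}^T\mathbf{y}$ and completing the square, this equals $\frac12\|\beta-\hat{\beta}_{OLS,\mathcal{A}}\|_2^2$ plus a constant. The Renet objective therefore separates as
\begin{equation*}
\sum_{j\in\mathcal{A}_\lambda}\left[\tfrac12(\beta_j-\hat{\beta}_{OLS_j})^2 + \theta\lambda\alpha|\beta_j| + \tfrac{\theta\lambda(1-\alpha)}{2}\beta_j^2\right] + \text{const},
\end{equation*}
so each coordinate can be minimized independently.

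For a single coordinate, write $z=\hat{\beta}_{OLS_j}$, $c=\theta\lambda\alpha$ and $d=\theta\lambda(1-\alpha)$. Because $d\ge 0$, the scalar objective is strictly convex, and its unique minimizer is characterized by the subgradient condition
\begin{equation*}
0\in (1+d)\beta - z + c\,\partial|\beta| .
\end{equation*}
There are two cases.
\begin{itemize}
\item If $\beta\neq 0$, the condition gives $\beta=(z-c\,\mathrm{sgn}(\beta))/(1+d)$. This is consistent only when $\mathrm{sgn}(\beta)=\mathrm{sgn}(z)$ and $|z|>c$, and it yields $\beta = (|z|-c)\,\mathrm{sgn}(z)/(1+d)$.
\item If $\beta=0$, the condition becomes $|z|\le c$.
\end{itemize}
Combining the two cases gives the soft-thresholding formula $(|z|-c)_+\,\mathrm{sgn}(z)/(1+d)$, which is exactly the displayed expression.

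There is no real obstacle here; the argument is routine. The only point needing care is the scaling convention. The claimed form requires that the factor $\frac{1}{2n}$ combined with $\mathbf{X}^T\mathbf{X}=n\mathbf{I}$ produce a unit quadratic coefficient, so I will check that normalization explicitly. I will also remark that the formula applies to $j\in\mathcal{A}_\lambda$, with $\hat{\beta}_{Renet_j}=0$ for $j\notin\mathcal{A}_\lambda$. Under orthogonality this is consistent with the general formula, because Stage~1 is the $\theta=1$ case of the same soft-threshold, so indices outside $\mathcal{A}_\lambda$ already satisfy $|\hat{\beta}_{OLS_j}|\le\lambda\alpha$.
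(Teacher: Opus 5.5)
Your core derivation is correct and follows the same route as the paper. Orthogonality splits the objective into scalar problems $\tfrac12(\beta_j-\hat\beta_{OLS_j})^2+\theta\lambda\alpha|\beta_j|+\tfrac{\theta\lambda(1-\alpha)}{2}\beta_j^2$, and the subgradient condition gives the soft-threshold. You are slightly more complete than the paper, which works out only the $\beta_j>0$ branch and leaves the zero case implicit.

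The closing consistency remark, however, is wrong. Stage~1 under orthogonality does give $|\hat\beta_{OLS_j}|\le\lambda\alpha$ for $j\notin\mathcal{A}_\lambda$. But the displayed formula returns zero only when $|\hat\beta_{OLS_j}|\le\theta\lambda\alpha$, which is a strictly stronger condition once $\theta<1$. Take any $j$ with $\theta\lambda\alpha<|\hat\beta_{OLS_j}|\le\lambda\alpha$: Definition~\ref{def:Renet} sets that coefficient to zero, while the formula gives a nonzero value. So the two are \emph{not} consistent outside the active set, and your inference from Stage~1 is a non sequitur.

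What you can correctly conclude is that the proposition holds for $j\in\mathcal{A}_\lambda$. There $|\hat\beta_{OLS_j}|>\lambda\alpha\ge\theta\lambda\alpha$, so the positive part never binds. The formula holds for all $j$ only if the relaxed objective is solved on the full design, not on $\mathbf{X}_{\mathcal{A}_\lambda}$.

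The paper's proof glosses over this by working on ``the (possibly restricted) feature space''. Its later claim that the active set ``expands'' as $\theta$ decreases tacitly uses the unrestricted reading, which conflicts with Definition~\ref{def:Renet}. You should state the restriction to $\mathcal{A}_\lambda$ explicitly and drop the consistency claim.
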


\begin{proof}
Let $\lambda, \alpha, \theta$ be fixed arbitrarily. The Renet objective function, for the resulting (possibly restricted) feature space $\mathbf{X}$ is then
\begin{equation} \label{eq:RenetObj}
    \mathcal{L}(\beta; \lambda, \alpha, \theta) = \frac{1}{2n} \| \mathbf{y} - \mathbf{X}\beta \|_2^2 + \theta \lambda \left( \alpha \| \beta \|_1 + \frac{1-\alpha}{2} \| \beta \|_2^2 \right)
\end{equation}
Under the assumption of orthogonality, the first term decomposes. Fixing $j$ and carrying out standard algebraic steps we obtain
\begin{equation} \label{eq:RenetObj2}
    \mathcal{L}(\beta_j; \lambda, \alpha, \theta) = \frac{1}{2n} \mathbf{y}^T\mathbf{y} - \frac{\mathbf{x_j}^T\mathbf{y}}{n}\beta_j + \frac{\mathbf{x_j}^T\mathbf{x_j}}{2n}\beta_j^2 + \theta \lambda \alpha |\beta_j| + \frac{\theta \lambda(1-\alpha)}{2}\beta_j^2
\end{equation}
Substituting $\mathbf{x_j}^T\mathbf{x_j} = n$ and $\hat{\beta}_{OLS_j} = \frac{1}{n}\mathbf{x_j}^T\mathbf{y}$, the objective simplifies to:
\begin{equation} \label{eq:RenetObj3}
    \mathcal{L}(\beta_j; \lambda, \alpha, \theta) = \frac{1}{2n} \mathbf{y}^T\mathbf{y} - \hat{\beta}_{OLS_j}\beta_j  + \frac{1}{2}\beta_j^2 + \theta \lambda \alpha |\beta_j| + \frac{\theta \lambda(1-\alpha)}{2}\beta_j^2
\end{equation}

The subgradient optimality condition requires:
\begin{equation}
    (\beta_j - \hat{\beta}_{OLS, j}) + \theta \lambda (1-\alpha)\beta_j + \theta \lambda \alpha s = 0
\end{equation}
where $s \in \partial |\beta_j|$. For $\beta_j \neq 0$, we have $s = \text{sgn}(\beta_j)$. Rearranging for $\beta_j$:
\begin{equation}
    \beta_j [1 + \theta \lambda (1-\alpha)] = \hat{\beta}_{OLS, j} - \theta \lambda \alpha \text{sgn}(\beta_j)
\end{equation}
For a positive solution $\beta_j > 0$, we must have $\hat{\beta}_{OLS, j} > \theta \lambda \alpha$. This leads directly to the soft-thresholding form stated in the proposition.
\end{proof}

Proposition \ref{prop:Bias} reveals a critical insight: by introducing the parameter $\theta$, Renet mitigates \textit{both} fundamental shrinkage biases of the Elastic Net ($\theta=1$) simultaneously. The original \cite{Zou2005} rescaling factor addresses the $\ell_2$ contraction (the denominator) but leaves the $\ell_1$ thresholding (the numerator) untouched. In contrast, under Renet ($\theta < 1$), the threshold drops to $\theta \lambda \alpha$, allowing weaker signals to survive or recover their original magnitude, while the denominator approaches unity, organically reducing the proportional shrinkage toward zero.

Indeed, a crucial distinction between Renet and the standard rescaled Elastic Net \citep{Zou2005} lies in the survival and recovery of small-magnitude signals. The original rescaling factor $1 + \lambda(1-\alpha)$ is a post-hoc correction that acts exclusively on the support already selected at $\theta=1$. Because it is a multiplicative scalar applied after thresholding, it cannot recover variables that were zeroed out by the initial penalty.  In contrast, Renet's relaxation parameter $\theta$ appears inside the soft-thresholding operator $(|\hat{\beta}_{OLS, j}| - \theta \lambda \alpha)_+$. As $\theta$ decreases, the active set expands, enabling Renet to recover signal magnitude for features that would otherwise be strictly suppressed under a standard Elastic Net at the same level of $\lambda$. Consequently, Renet addresses selection bias at its geometric root, whereas post-estimation rescaling merely mitigates ($\ell_2$-induced) estimation bias. 

Consequently, Renet is able to preserve a larger fraction of the original signal, analytically demonstrating how relaxation mitigates estimation bias while the non-zero $(1-\alpha)$ term maintains the strict convexity and uniqueness properties discussed in Section \ref{sec:contrib}. Rather than imposing a fixed correction that targets only the ridge-induced shrinkage, Renet treats the total ``unshrinkage'' intensity as a tunable hyperparameter. This allows the estimator to navigate the continuum between the highly-regularized initial fit and the low-bias solution, adapting to the specific signal-to-noise ratio and covariance structure of the data, offering a more general, data-driven solution to the double penalization problem that concerned \cite{Zou2005}.

To explicitly quantify Renet's combined debiasing effect, we derive the following recovery ratio.

\begin{corollary}[Signal Recovery Ratio] \label{cor:Recovery}
Let $R(\theta) = \frac{|\hat{\beta}_{Renet, j}(\theta)|}{|\hat{\beta}_{OLS, j}|}$ denote the fraction of the unconstrained signal magnitude preserved by the Renet estimator for a selected feature where $|\hat{\beta}_{OLS, j}| > \theta \lambda \alpha$. Under the orthogonal design assumptions of Proposition \ref{prop:Bias}, this ratio is given by:
\begin{equation} \label{eq:RecoveryRatio}
    R(\theta) = \frac{1 - \frac{\theta \lambda \alpha}{|\hat{\beta}_{OLS, j}|}}{1 + \theta \lambda (1-\alpha)}
\end{equation}
Furthermore, for any fixed $\lambda > 0$, $\alpha \in [0, 1)$, and $|\hat{\beta}_{OLS, j}| > 0$, the recovery ratio $R(\theta)$ is a monotonically decreasing function of $\theta$ on the interval $(0, 1]$, with $\lim_{\theta \to 0^+} R(\theta) = 1$.
\end{corollary}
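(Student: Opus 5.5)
The plan is to read the recovery ratio directly off the closed form in Proposition \ref{prop:Bias} and then do an elementary monotonicity analysis in $\theta$. Write $b = |\hat{\beta}_{OLS, j}| > 0$, $c_1 = \lambda\alpha \ge 0$ and $c_2 = \lambda(1-\alpha) > 0$; the last inequality uses $\lambda>0$ and $\alpha<1$.

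First, I derive the formula. Under the hypothesis $b > \theta c_1$, the positive part in Proposition \ref{prop:Bias} is active, so
\begin{equation*}
|\hat{\beta}_{Renet, j}(\theta)| = \frac{b - \theta c_1}{1 + \theta c_2}.
\end{equation*}
Here $|\mathrm{sgn}(\cdot)|=1$ because $b>0$. Dividing by $b$ gives Equation \eqref{eq:RecoveryRatio}.

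Second, I establish monotonicity. On the range of $\theta \in (0,1]$ where $b > \theta c_1$, I write $R(\theta) = N(\theta)/D(\theta)$ with
\begin{equation*}
N(\theta) = 1 - \frac{\theta c_1}{b}, \qquad D(\theta) = 1 + \theta c_2 .
\end{equation*}
The numerator $N$ is positive and non-increasing, since $c_1 \ge 0$. The denominator $D$ is positive and strictly increasing, since $c_2 > 0$. A positive non-increasing function divided by a positive strictly increasing function is strictly decreasing. This gives monotone decrease; it is strict precisely because $\alpha<1$ forces $c_2>0$, which explains why the statement excludes $\alpha = 1$. As a cross-check, the quotient rule gives
\begin{equation*}
R'(\theta) = -\frac{c_1/b + c_2}{(1+\theta c_2)^2} < 0 .
\end{equation*}

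Third, I take the limit. Both $N$ and $D$ are continuous at $0$ with $N(0)=D(0)=1$, so $R(\theta)\to 1$ as $\theta\to 0^+$. The condition $b>\theta c_1$ holds automatically for all small enough $\theta>0$, so the limit is taken inside the domain where the formula applies.

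The main obstacle is the domain subtlety. The corollary asserts monotonicity on all of $(0,1]$, but the closed form only holds where $b > \theta\lambda\alpha$. For $\theta \ge b/(\lambda\alpha)$ the feature is thresholded to zero, so $R(\theta)=0$ there. I handle this by noting that the full ratio $\theta \mapsto |\hat\beta_{Renet,j}(\theta)|/b$ equals
\begin{equation*}
\frac{(1-\theta c_1/b)_+}{1+\theta c_2}.
\end{equation*}
This expression is continuous, strictly decreasing while positive, and identically zero afterward. It is therefore monotonically non-increasing on all of $(0,1]$, and strictly decreasing on the selected region $\{\theta : b > \theta\lambda\alpha\}$, which is the region the corollary is concerned with.
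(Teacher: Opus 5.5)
Your proof is correct and takes essentially the same route as the paper: divide the closed form of Proposition~\ref{prop:Bias} by $|\hat{\beta}_{OLS, j}|$, sign $\partial R/\partial\theta$ (your quotient-rule cross-check is exactly the paper's computation), and let $\theta \to 0^+$. The rest tightens the paper's argument without changing the method. You rely only on $c_1 \ge 0$ and $c_2 > 0$, whereas the paper loosely calls $\alpha$ positive although $\alpha = 0$ is allowed. Your $(\cdot)_+$ discussion also covers the domain issue the paper ignores; if ``selected'' means $j \in \mathcal{A}_{\lambda}$, that issue is in fact vacuous under orthogonal design, since then $|\hat{\beta}_{OLS, j}| > \lambda\alpha \ge \theta\lambda\alpha$ for every $\theta \in (0,1]$.
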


\begin{proof}
The expression for $R(\theta)$ follows directly from dividing the Renet solution in Proposition \ref{prop:Bias} by $|\hat{\beta}_{OLS, j}|$. To show monotonicity, we take the partial derivative with respect to $\theta$:
\begin{equation}
    \frac{\partial R}{\partial \theta} = - \frac{\lambda \left(\frac{\alpha}{|\hat{\beta}_{OLS, j}|}  + (1-\alpha) \right)}{[1 + \theta \lambda (1-\alpha)]^2}
\end{equation}
Since $\lambda, \alpha, (1-\alpha),$ and $|\hat{\beta}_{OLS, j}|$ are all positive, $\frac{\partial R}{\partial \theta} < 0$ for all $\theta \in (0, 1]$. The limit as $\theta \to 0^+$ clearly yields $R(\theta) = 1$, representing the recovery of the unbiased OLS estimate.
\end{proof}

\subsection{The Synergy Between Relaxation and the 1-SE Rule}
\label{sec:1SE}
A key theoretical insight of the present work is the relationship between the ``One-Standard-Error'' (1-SE) rule and relaxation.

The 1-SE rule is a well-known heuristic used to select a sparser model than the one that minimizes out-of-fold cross-validation error ($CV_{min}$). It selects the largest $\lambda$ (strongest regularization) such that the cross-validation error is within one standard error of the minimum. This typically results in:
\begin{itemize}
    \item \textbf{Higher Precision (Sparsity):} False positives are reduced as the active set shrinks.
    \item \textbf{Higher Bias (Shrinkage):} The surviving coefficients are heavily penalized, often leading to an increase in mean squared prediction error compared to the dense $CV_{min}$ model.
\end{itemize}

Renet resolves this trade-off. By introducing $\theta$, we decouple variable selection from coefficient estimation. Let $\lambda_{1SE}$ determine the active set $\mathcal{A}_{1SE}$. While the 1-SE rule tends to better recover the true support (by ignoring statistically insignificant accuracy gains in favor of a sparser representation) it severely underestimates coefficient magnitudes. In Renet, the relaxation parameter $\theta$ effectively scales the penalty down to $\theta \lambda_{1SE}$ \textit{without} changing the active set.

This mechanism allows Renet to ``stretch the budget'' of the 1-SE rule: for any given level of coefficient shrinkage (bias) we can achieve a sparser model than if the 1-SE rule was used with standard Elastic Net; conversely, given a sparsity level induced by the 1-SE rule,  Renet coefficients can enjoy lower bias. Indeed, in Renet we can select a very parsimonious model by choosing a large $\lambda$ and then repair the shrinkage damage by choosing a small $\theta$, achieving a point on the bias-variance curve that is inaccessible to the standard Elastic Net, regardless of the choice of $\lambda$.  Consequently, the 1-SE rule (often) provides an improved active set for relaxation, while relaxation eliminates the primary predictive penalty of the 1-SE rule (excessive regularization) -- a mutual reinforcement that justifies the ``synergy'' alluded to at the start of this section.

We formalize this advantage in the following proposition, which shows that if the 1-SE rule successfully screens out noise, Renet with 1-SE rule dominates the standard Elastic Net in expectation.

\begin{proposition}[1-SE Renet Dominance Hierarchy] \label{prop:Hierarchy}
Let $\mathcal{S}$ be the true support of the data-generating process. Assume the 1-SE rule correctly identifies the true support ($\mathcal{A}_{1SE} = \mathcal{S}$), whereas the standard $CV_{min}$ rule selects a superset $\mathcal{A}_{min} \supseteq \mathcal{S}$. Then, there exists a $\theta^* \in (0, 1]$ such that, for any CV-optimized $\theta' \in (0, 1]$ the following hierarchy holds in expectation:
\begin{equation}
    MSE(\hat{\beta}_{Renet}(\lambda_{1SE}, \theta^*)) \le MSE(\hat{\beta}_{Renet}(\lambda_{min}, \theta')) \le MSE(\hat{\beta}_{EN}(\lambda_{min})) \le MSE(\hat{\beta}_{EN}(\lambda_{1SE}))
\end{equation}
The first inequality is strict whenever $\mathcal{A}_{min} \supset \mathcal{S}$.
\end{proposition}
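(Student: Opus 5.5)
The plan is to prove the chain one inequality at a time, working in the orthogonal design of Proposition \ref{prop:Bias}. There every estimator is an explicit coordinatewise soft-threshold/shrink map of $\hat{\beta}_{OLS}\sim N(\beta^0,\sigma^2 I/n)$, so expected MSE splits as a sum of per-coordinate risks. Throughout I read ``CV-optimized'' as the oracle minimizer of expected risk over the relevant family. This reading is what turns the inclusion arguments below into inequalities; with genuine cross-validation noise the statement only holds approximately, and I would say so explicitly.

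\textbf{Second inequality.} The family $\{\hat{\beta}_{Renet}(\lambda_{min},\theta):\theta\in(0,1]\}$ contains $\theta=1$, which is exactly $\hat{\beta}_{EN}(\lambda_{min})$ by Definition \ref{def:Renet}. Minimizing over $\theta$ therefore gives $MSE(\hat{\beta}_{Renet}(\lambda_{min},\theta'))\le MSE(\hat{\beta}_{EN}(\lambda_{min}))$, since the CV-optimized $\theta'$ is at least as good as $\theta=1$.

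\textbf{Third inequality.} This is the one that is not a consequence of the construction. It holds because $\lambda_{min}$ is, in expectation, the risk minimizer over $\lambda$ of the Elastic Net path, and $\lambda_{1SE}$ is just another point on that path. So $MSE(\hat{\beta}_{EN}(\lambda_{min}))\le MSE(\hat{\beta}_{EN}(\lambda_{1SE}))$ follows from the same oracle reading of CV.

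\textbf{First inequality.} Write $\mathcal{A}_{min}=\mathcal{S}\cup\mathcal{N}$ with $\mathcal{N}$ the noise coordinates. I would handle the two groups separately and then pick $\theta^*$.
\begin{enumerate}
\item On $\mathcal{N}$ the truth is zero, so any nonzero estimate adds risk. The $\lambda_{1SE}$ refit sets these coordinates to exactly zero, contributing zero risk. The $\lambda_{min}$ refit contributes $\mathbb{E}\big[\hat{\beta}_j(\theta')^2\big]\ge0$.
\item On $\mathcal{S}$ the two refits differ only in their effective thresholds $\theta\lambda\alpha$ and shrinkage factors $1+\theta\lambda(1-\alpha)$. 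By Corollary \ref{cor:Recovery}, varying $\theta$ on $\mathcal{A}_{1SE}$ sweeps the recovery ratio continuously up to $1$.
\item I choose $\theta^*=\theta'\lambda_{min}/\lambda_{1SE}\in(0,1]$, which is valid because $\lambda_{1SE}\ge\lambda_{min}$. Then $\theta^*\lambda_{1SE}=\theta'\lambda_{min}$, so on $\mathcal{S}$ the two refits solve identical coordinatewise problems and their per-coordinate risks coincide.
\item Summing the two groups gives $MSE(\hat{\beta}_{Renet}(\lambda_{1SE},\theta^*))\le MSE(\hat{\beta}_{Renet}(\lambda_{min},\theta'))$.
\end{enumerate}

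\textbf{Strictness.} Suppose $\mathcal{N}\ne\emptyset$. The noise-group term is then strictly positive, because for Gaussian noise $P(|\hat{\beta}_{OLS,j}|>\theta'\lambda_{min}\alpha)>0$. The only exception is an infinite threshold, which cannot occur since $\lambda_{min}<\infty$.

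\textbf{Main obstacle.} The hardest step is the first inequality beyond orthogonal design. There the refit on $\mathcal{A}_{min}$ couples $\mathcal{S}$ and $\mathcal{N}$ through $\mathbf{X}^T\mathbf{X}$, so the matching of effective penalties no longer gives identical estimates on $\mathcal{S}$. My first attempt would use the KKT/stabilized-Lasso form of Proposition \ref{prop:LassoStab}. I would compare the restricted problem on $\mathcal{S}$ with the problem on $\mathcal{A}_{min}$, and control the extra variance from including $\mathcal{N}$ through a Schur-complement argument: the prediction-risk variance term $\mathrm{tr}\big((X^TX+cI)^{-1}X^TX\,\cdots\big)$ is monotone in the active set when the truth is supported on $\mathcal{S}$. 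If that fails, I would state the result under orthogonal design, consistent with the preceding subsection, and note the general case as a heuristic.
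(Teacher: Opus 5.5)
For the second and third inequalities you argue as the paper does: the relaxation family contains $\theta=1$, and $\lambda_{min}$ minimizes CV error along the Elastic Net path. You usefully make explicit the oracle reading of ``CV-optimized'' that both arguments silently need.

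For the first inequality your route is genuinely different. The paper argues by bias--variance decomposition. The refit on $\mathcal{A}_{1SE}=\mathcal{S}$ can approach the oracle OLS risk $\sigma^2\mathrm{tr}((\mathbf{X}_{\mathcal{S}}^T\mathbf{X}_{\mathcal{S}})^{-1})$ as $\theta\to0$, while any fit on the superset must either accept variance inflation or raise $\theta'$ and reintroduce bias. That argument is stated for arbitrary designs but stays qualitative. The OLS oracle risk is not a lower bound for shrunken refits on $\mathcal{A}_{min}$, which can beat it when signals are weak. Variance inflation is only asserted for unbiased fits, so the two sides of the trade-off are never actually compared.

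Your matching $\theta^*\lambda_{1SE}=\theta'\lambda_{min}$ does compare them, realization by realization. The $\mathcal{A}_{min}$ refit is exactly the $\mathcal{S}$ refit at $\theta^*$ plus a pure-noise block. This gives an explicit $\theta^*$, an exact risk decomposition, and strictness. The same matching is what makes the paper's ``trivial'' equality case $\mathcal{A}_{min}=\mathcal{S}$ hold for every design.

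Your argument also needs less than full orthogonality. You only use that the refit on $\mathcal{A}_{min}$ separates across $\mathcal{S}$ and $\mathcal{N}$. That holds whenever $\mathbf{X}_{\mathcal{N}}^T\mathbf{X}_{\mathcal{S}}=0$, since the cross term of the quadratic vanishes and both penalties are separable. The error then splits as $\|\mathbf{X}_{\mathcal{S}}(\hat{\beta}_{\mathcal{S}}-\beta_{\mathcal{S}})\|_2^2+\|\mathbf{X}_{\mathcal{N}}\hat{\beta}_{\mathcal{N}}\|_2^2$.

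The genuine gap is that the proposition is stated for general designs, but your argument stops at $\mathbf{X}_{\mathcal{N}}^T\mathbf{X}_{\mathcal{S}}=0$. The fallback you sketch would not close it. Given $\mathcal{A}_{1SE}=\mathcal{S}$, every $\hat{\beta}_{Renet}(\lambda_{1SE},\theta)$ depends on $\mathbf{y}$ only through $\mathbf{X}_{\mathcal{S}}^T\mathbf{y}$. The $\mathcal{S}$-block of the refit on $\mathcal{A}_{min}$ also depends on $\mathbf{X}_{\mathcal{N}}^T\mathbf{y}$ as soon as $\mathbf{X}_{\mathcal{N}}^T\mathbf{X}_{\mathcal{S}}\neq0$. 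For example, as $\theta'\to0$ it becomes the partialled-out OLS $(\mathbf{X}_{\mathcal{S}}^T\mathbf{M}\mathbf{X}_{\mathcal{S}})^{-1}\mathbf{X}_{\mathcal{S}}^T\mathbf{M}\mathbf{y}$, with $\mathbf{M}$ the residual projection off $\mathbf{X}_{\mathcal{N}}$. So no choice of $\theta^*$ reproduces it, and there is nothing to couple to.

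A variance comparison at a fixed penalty cannot replace the coupling. Enlarging the active set also weakens the shrinkage at that penalty: for the ridge part, $\mathbf{X}_A(\mathbf{X}_A^T\mathbf{X}_A+c\mathbf{I})^{-1}\mathbf{X}_A^T=\mathbf{I}-c(\mathbf{X}_A\mathbf{X}_A^T+c\mathbf{I})^{-1}$ grows in Loewner order with $A$. A variance bound alone therefore does not decide the comparison. With the $\ell_1$ part the refit is also nonlinear, so no trace formula is available anyway. It is not even clear that the first inequality survives correlation between $\mathcal{N}$ and $\mathcal{S}$, so the general case needs a new idea or an extra assumption. As written, you have proved the block-orthogonal version; the paper's heuristic does not settle the correlated case either.

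Two smaller repairs. First, the statement fixes $\theta^*$ before quantifying over $\theta'$, while your $\theta^*$ depends on $\theta'$. Under your oracle reading all CV-optimal $\theta'$ have the same risk, so build $\theta^*$ from one of them and the inequality transfers to the rest.

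Second, under Gaussian noise the event $\{\mathcal{A}_{1SE}=\mathcal{S}\}$ has probability below one, so the expectations must be read conditionally on the support event. Your coupling is pathwise and survives this, but you should argue strictness pathwise as well. On that event every $j\in\mathcal{N}$ has $|\hat{\beta}_{OLS,j}|>\lambda_{min}\alpha\ge\theta'\lambda_{min}\alpha$, so $\hat{\beta}_j(\theta')\neq0$ surely. The unconditional probability $P(|\hat{\beta}_{OLS,j}|>\theta'\lambda_{min}\alpha)>0$ that you invoke is not the relevant quantity.
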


\begin{proof}
The right-most inequality is true by the definition of $\lambda_{min}$ as the global minimizer of the cross-validation error surface for the standard Elastic Net. The middle inequality  holds because the Renet search space over $\theta \in (0, 1]$ includes the standard Elastic Net solution ($\theta=1$); thus, an optimized $\theta'$ must yield a risk no greater than the standard $CV_{min}$ fit.

For the left-most (and crucial) inequality, the case of equality is trivial, so we restrict attention to the $\mathcal{A}_{min} \supset \mathcal{S}$ case.  Consider the risk decomposition $R(\hat{\beta}) = B^2(\hat{\beta}) + Var(\hat{\beta})$. Under the assumption $\mathcal{A}_{1SE} = \mathcal{S}$, the Renet estimator $\hat{\beta}_{Renet}(\lambda_{1SE}, \theta)$ can approach the Oracle risk $R(\hat{\beta}_{Oracle}) = \sigma^2 \text{tr}((\mathbf{X}_{\mathcal{S}}^T \mathbf{X}_{\mathcal{S}})^{-1})$ as $\theta \to 0$. In contrast, any estimator operating on the noisy superset $\mathcal{A}_{min} \supset \mathcal{S}$ is subject to variance inflation $Var(\mathcal{A}_{min}) > Var(\mathcal{S})$. Specifically, the variance of an unbiased (relaxed) estimator increases with the inclusion of irrelevant predictors. Even with an optimized $\theta'$, the estimator on $\mathcal{A}_{min}$ must either accept this higher variance or increase $\theta'$ to suppress it, which necessarily re-introduces bias. Since the 1-SE support is the minimal sufficient set, its optimal relaxation $\theta^*$ achieves a lower bound on risk that is structurally inaccessible to any model containing noise variables.
\end{proof}

This hierarchy reflects that the perceived weakness of the 1-SE rule---its heavy shrinkage---vanishes in Renet. Once decoupled via relaxation, the 1-SE rule's (often) superior precision becomes a predictive advantage, allowing Renet to achieve a risk profile that is inaccessible to denser, more regularized models like the standard Elastic Net. 

Lastly, it should be noted that while we have focused on the standard 1-SE rule, the multiplier need not be unity: the degree of conservative selection can be adjusted (e.g., a $k$-SE rule) to reflect specific precision-recall requirements or accuracy-sparsity goals. Regardless of the multiplier chosen, relaxation will minimize the induced loss in predictive accuracy, if there is any.

\section{Empirical Results}
\label{sec:Empirics}

\subsection{Formal Setup}
We consider a varied selection of 10 real-world benchmark datasets (Table \ref{tab:datasets_real}) and 10 synthetic datasets (Table \ref{tab:datasets_synth}) that include a wide range of scenarios covering small-to-large sample sizes, dimensionality, supports and signal-to-noise ratios (SNR).  Synthetic scenario $S_1$ serves as baseline ($n \gg p$, high SNR), where any reasonable linear model should perform well. On the opposite end of the spectrum, scenarios $S_7$ through $S_{10}$ specifically target the $p \geq n$ regime, which is more challenging for linear models and is designed to test the limits of OLS relaxation, where the stability of the estimator is most vulnerable.

As customary in the literature (e.g. \cite{Zou2005, Meinshausen2007, Zou2009}), for all synthetic scenarios $S_1$ through $S_{10}$, we generate the response variable following a linear model $y = X\beta + \varepsilon$, where $\varepsilon \sim \mathcal{N}(0,\sigma^2 \mathbf{I})$, $\sigma$ being scenario-dependent. Features are sampled from a multivariate normal distribution $X \sim \mathcal{N}(0,\Sigma)$, with the covariance structure $\Sigma$ specified in Table \ref{tab:datasets_synth} for each scenario. Concretely,  $\Sigma_\rho$ denotes an identity matrix with $\rho$ in its off-diagonal entries. On the other hand, $\Sigma_{\rho, T}$ is an identity matrix with decaying off-diagonal values (Toeplitz structure), starting at $\rho$. Lastly, $\Sigma_{\rho, B}$ denotes an identity matrix with disjoint correlation blocks for signal and noise features, having pairwise intra-block correlations of $\rho$ and $\rho/2$, respectively.

To maintain the original dimensionality, categorical variables are target-encoded following a cross fitting scheme to prevent target leakage and overfitting. Empirical Bayes smoothing is applied to shrink category-level estimates towards the global mean, mitigating the high variance associated with low-cardinality levels.

In order to ensure our results are not sensitive to a specific random state, we employ three seeds: 123 and 321 (for consistency with our previous work) and 42 (a standard reference in the literature). The reported standard errors are pooled to incorporate both within-dataset variability (folds) and between-seed variability.

\begin{table}[ht]
\centering
\caption{Real-world datasets. $n$ denotes the sample size, $p$ the total number of features, and $c$ the number of categorical predictors out of $p$.}
\label{tab:datasets_real}
\begin{tabular}{@{}l l l l l@{}}
\toprule
\textbf{Name} & \textbf{n} & \textbf{p} & \textbf{c} & \textbf{Source} \\ 
\midrule
Abalone & 4177 & 8 & 1 & \href{https://archive.ics.uci.edu/dataset/1/abalone}{UCI Machine Learning Repository} \\
BMI & 136 & 11 & 1 & Applied Linear Regression \citep{Weisberg2014} \\
College & 775 & 17 & 1 & ISLR \citep{James2013} \\
Crime & 1994 & 100 & 1 & \href{https://archive.ics.uci.edu/dataset/183/communities+and+crime}{UCI Machine Learning Repository} \\
Diabetes & 442 & 10 & 1 & Least Angle Regression \citep{Efron2004} \\
Doctor & 5190 & 14 & 3 & Econometric models \citep{Cameron1986} \\
Parkinson & 5875 & 19 & 1 & \href{https://archive.ics.uci.edu/dataset/189/parkinsons+telemonitoring}{UCI Machine Learning Repository} \\
Riboflavin & 71 & 4088 & 0 & High-dimensional statistics \citep{Bühlmann2014} \\
Student & 649 & 32 & 17 & \href{https://archive.ics.uci.edu/dataset/320/student+performance}{UCI Machine Learning Repository} \\
Wine & 6497 & 11 & 0 & \href{https://archive.ics.uci.edu/dataset/186/wine+quality}{UCI Machine Learning Repository} \\
\bottomrule
\end{tabular}
\end{table}

\begin{table}[ht]
\centering
\caption{Synthetic datasets. $n$ denotes the sample size, $p$ the number of features,  $s$ the number of informative (signal) features,  $\rho$ the correlation among features ($T=$ Toeplitz, $B=$ Block-correlated),  and $\sigma$ the standard deviation of the Gaussian noise.}
\label{tab:datasets_synth}
\begin{tabular}{@{}l l l l l l l@{}}
\toprule
\textbf{Name} & \textbf{n} & \textbf{p} & \textbf{s} & $\mathbf{\rho}$ & $\mathbf{\sigma}$ & $\mathbf{\Sigma}$ \\ 
\midrule
S1 & 100,000 & 20 & 10 & 0 & 1 & $\mathbf{I}$\\
S2 & 5,000 & 20 & 5 & 0.5 & 2 & $\Sigma_\rho$\\
S3 & 2,000 & 20 & 2 & 0.75 & 0.5 & $\Sigma_\rho$\\
S4 & 1,000 & 100 & 10 & 0.75 ($T$) & 2 & $\Sigma_{\rho, T}$\\
S5 & 5,000 & 100 & 80 & 0.5 & 2 & $\Sigma_\rho$\\
S6 & 500 & 20 & 2 & 0.25 & 1 & $\Sigma_\rho$\\
S7 & 300 & 300 & 10 & 0.5 & 1 & $\Sigma_\rho$\\
S8 & 90 & 4,000 & 100 & 0.25 & 0.25 & $\Sigma_\rho$\\
S9 & 200 & 220 & 20 & 0.75 & 2 & $\Sigma_\rho$\\
S10 & 300 & 3,000 & 30 & 0.75 ($B$) & 2 & $\Sigma_{\rho, B}$\\
\bottomrule
\end{tabular}
\end{table}

Our benchmarks include algorithms that are meaningfully related to Renet. The most relevant comparisons are kept in the main body of this article, while the rest are included in Appendix \ref{ap:ExRes}. The algorithms that fall in the first category are industry-standard and listed below alongside our proposed Renet algorithm. 
\begin{itemize}
\item \texttt{scikit-learn}'s \texttt{ElasticNetCV} with and without 1-SE rule
\item Adaptive Elastic Net (AEN) with a cross-validated Ridge pilot
\item Renet with and without 1-SE rule
\end{itemize}

Our Python implementation of the Adaptive Elastic Net (AEN) utilizes a feature-weighting transformation rather than a custom coordinate descent solver. Unlike the original formulation by \cite{Zou2005} -- which typically applies adaptive weights to the $\ell_1$ component only -- our approach effectively scales both the $\ell_1$ and $\ell_2$ penalties for each coefficient by a common feature-specific weight $w_j$. In practice, the deviation from the original AEN is expected to be negligible: since our $\ell_1$-ratio is set to $0.95$, the $\ell_1$ penalty is weighted 19 times more heavily than the $\ell_2$ penalty. The latter is primarily retained to provide numerical stability and ensure a unique solution path in the $p > n$ regime. Pseudocode for this procedure is provided in Appendix \ref{ap:Algo_AEN}, and implemented  in the \texttt{AdaptiveNetCV} class of the \texttt{trust-free} package (version $\geq$ 3.0.0).

\bigskip

\noindent The benchmarks presented in Appendix \ref{ap:ExRes} compare Renet against additional baselines:
\begin{itemize}
    \item \textbf{\texttt{Celer} Elastic Net:} The \texttt{ElasticNetCV} implementation from the specialized \texttt{celer} library,  evaluated both at the cross-validation minimum and via the 1-SE rule. 
    \item \textbf{Stepwise OLS:} A forward-backward sequential feature selector using \texttt{mlxtend} \citep{Raschka2018} wrapped around an OLS objective. This serves as a classical,  non-regularized baseline for sparsity and computational scaling.
    \item \textbf{Renet (ablated):} To isolate the performance gains of our relaxation logic, we include \textit{Renet without relaxation}. This configuration utilizes Renet's adaptive solver backend,  which dynamically dispatches the optimization task to either \texttt{scikit-learn} or \texttt{celer} depending on the topology of the data. 
\end{itemize}

\noindent To ensure full reproducibility, the real-world datasets, the synthetic data generation suite, and the scripts used for benchmarking are provided in the accompanying repository at \href{https://github.com/adc-trust-ai/trust-free/}{\texttt{github.com/adc-trust-ai/trust-free}}.

\subsection{Results and Discussion}

We derive the following key conclusions from the benchmarking results presented in Tables \ref{tab:main_results_Real} and \ref{tab:main_results_Synth}:

\begin{itemize}
    \item \textbf{Accuracy, sparsity, and the relaxation effect:} In high-SNR regimes (e.g., \textit{BMI}, \textit{College}, \textit{S1}, \textit{S5}), both AEN and Renet consistently outperform standard Elastic Net, which suffers from inherent $\ell_1$ and $\ell_2$ estimation bias. Notably, in more complex, high-dimensional scenarios such as \textit{Riboflavin} ($n=71$, $p=4,088$) and \textit{S10}, Renet demonstrates superior predictive robustness. By allowing the relaxation parameter $\theta$ to tend toward $0$, Renet recovers the signal amplitude that standard Elastic Net mutes through over-regularization. The gains in accuracy and sparsity of Renet relative to the ablated Renet baseline in Appendix \ref{ap:ExRes} are, similarly, very significant. When combined with the 1-SE rule, Renet delivers a ``complexity collapse'' that standard methods cannot replicate. In the \textit{Crime} and \textit{Student} datasets, Renet with the 1-SE rule achieves comparable $R^2$ to the CV-min Elastic Net while using approximately $85\%$ fewer features. While AEN remains an excellent alternative in high-SNR environments, it proves fragile in ultra-high-dimensional problems (e.g.,  \textit{Riboflavin}, \textit{S8}) or lower-SNR and highly correlated settings (e.g., \textit{S7}, \textit{S9}, \textit{S10}), where Renet's performance remains stable. 
        
    \item \textbf{Validation of the 1-SE Logic:} In all tested cases, Renet with the 1-SE rule yields models with comparable or higher accuracy than the standard Elastic Net with the 1-SE rule while simultaneously being sparser, often by a wide margin. This provides strong empirical evidence for our theoretical argument in Section \ref{sec:1SE}: the 1-SE rule is significantly more effective when the candidate models are permitted to debias their coefficients through relaxation.
    
    \item \textbf{Computational Efficiency:} Despite the sophisticated relaxation logic, Renet's speed remains competitive. This efficiency stems from two primary sources. First, the overhead of the occasional Elastic Net refits is negligible; the sub-problems involve only the sparse support identified in the initial path and converge in very few iterations due to warm-started coefficients and a reduced array of candidate penalty values. Second, Renet dynamically switches between the \texttt{scikit-learn} and \texttt{celer} solvers based on the data topology and the $n/p$ ratio. Concretely, in high-dimensional cases like \textit{Riboflavin}, Renet can be significantly faster than the standard CV-min path ($0.471$s vs $1.468$s), as the $n \ll p$ regime triggers the switch to the \texttt{celer} solver, which is state-of-the-art in this domain.
\end{itemize}

Overall, as Figure \ref{fig:renet_value_frontier} illustrates as well,  Renet consistently improves upon the standard Elastic Net across both accuracy and sparsity. When the 1-SE rule is applied, Renet exploits a unique synergy between selection and relaxation to deliver extreme sparsity -- exemplified by the \textit{Student} dataset where a single feature captures $84\%$ of the variance without compromising predictive power (less than $0.5\%$ below the highest scoring algorithm). While AEN remains a very competitive alternative in well-behaved regimes, Renet's adaptive relaxation makes it a robust performer across all scenarios tested, including low SNR, high multicollinearity, and ultra high-dimensional problems.

\begin{figure}[h!]
    \centering
    \begin{subfigure}[b]{0.493\textwidth}
        \centering
        \includegraphics[width=\textwidth]{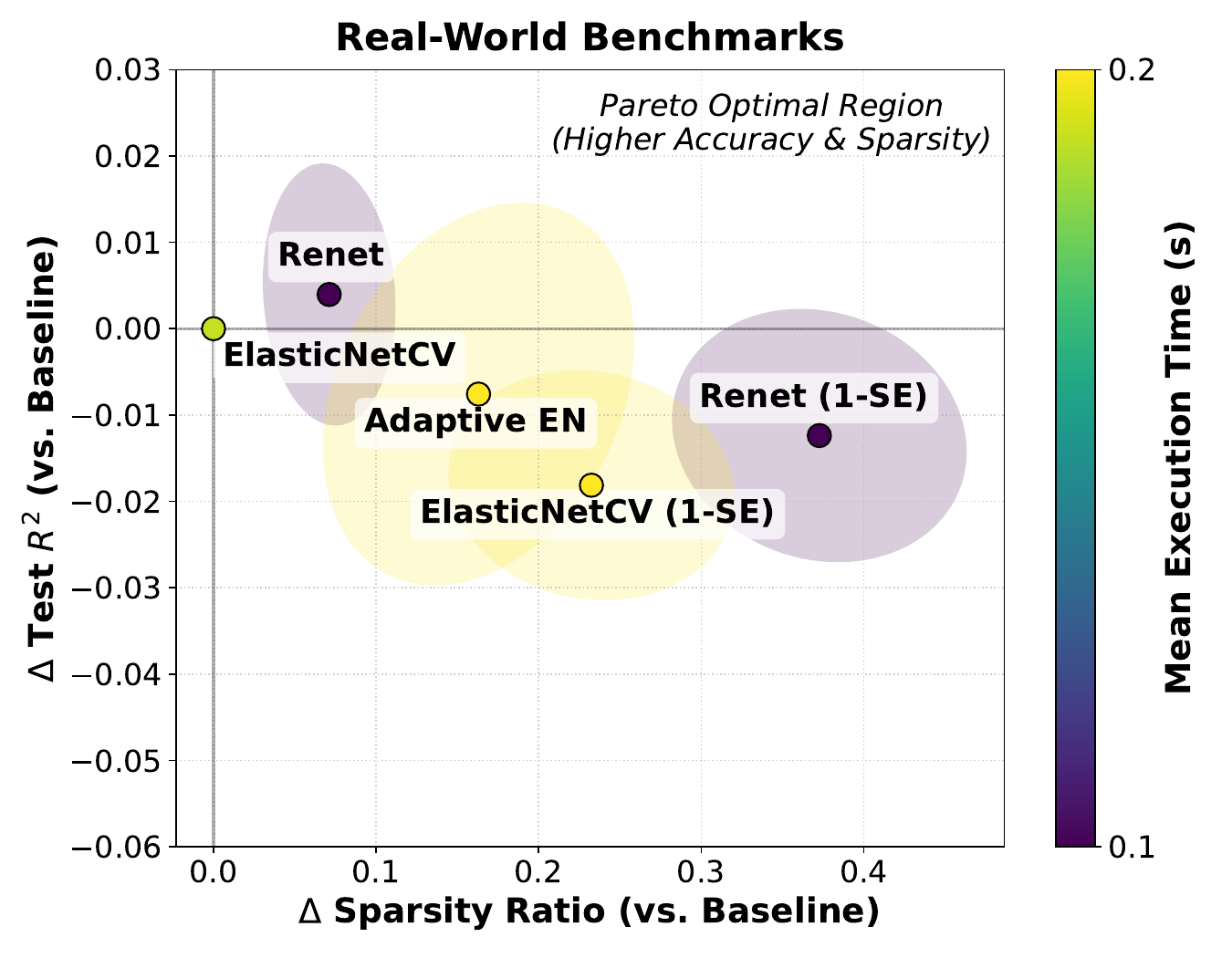}
        \caption{Real-World Benchmarks}
        \label{fig:delta_real}
    \end{subfigure}
    \hfill
    \begin{subfigure}[b]{0.493\textwidth}
        \centering
        \includegraphics[width=\textwidth]{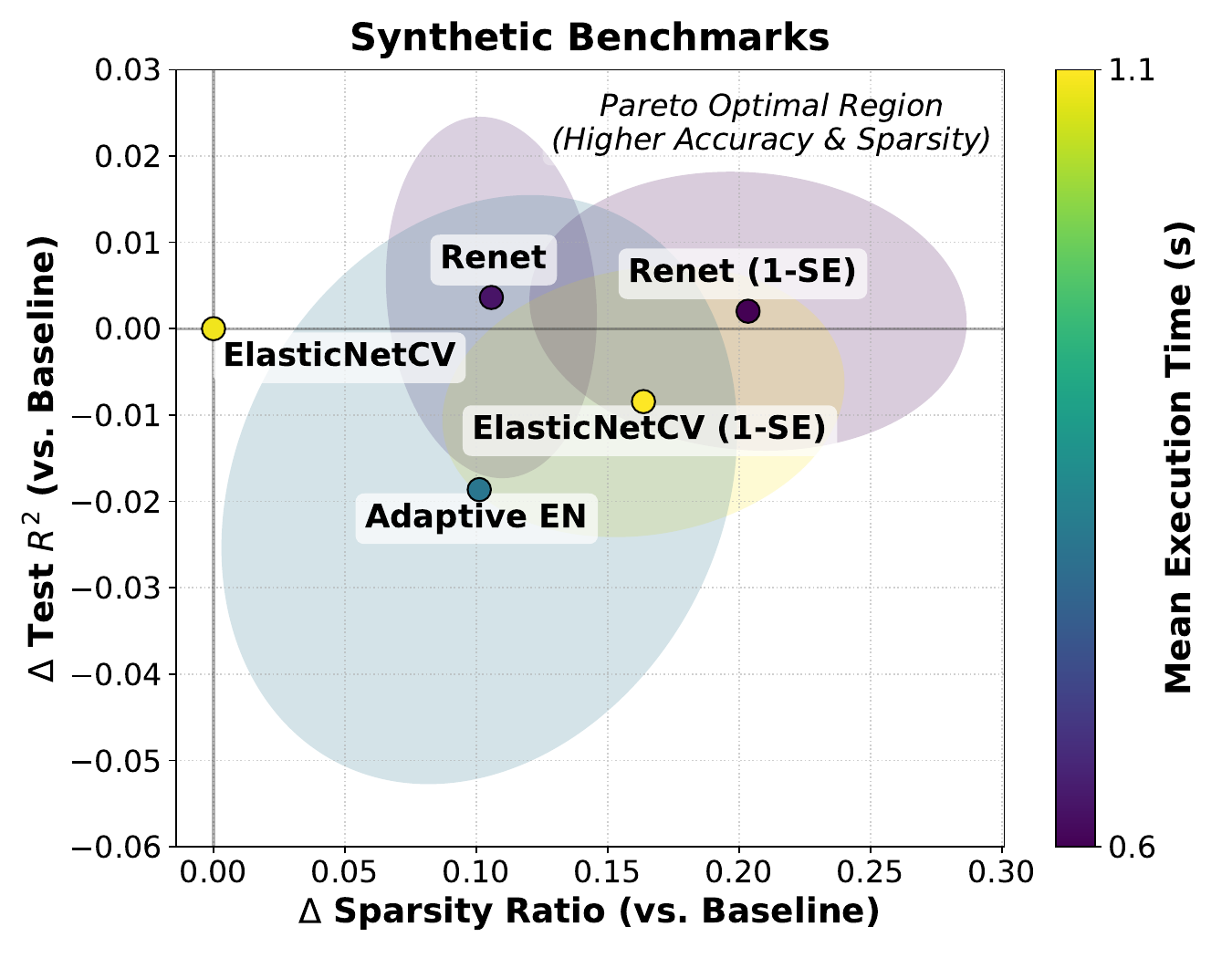}
        \caption{Synthetic Benchmarks}
        \label{fig:delta_synthetic}
    \end{subfigure}
    
    \caption{Performance is visualized as the absolute change relative to \textit{scikit-learn}'s Elastic Net (centered at 0,0). Ellipses show a 1-SE  confidence region of the bivariate mean, pooled across datasets, folds, and seeds. }
    \label{fig:renet_value_frontier}
\end{figure}

\setlength{\tabcolsep}{4pt} 
\begin{table}[h!]
\centering
\caption{Real-world 10-fold out-of-sample results with standard errors in parenthesis. Best metrics in bold.}
\label{tab:main_results_Real}
\small 
\begin{tabular}{@{}l l c c c c c@{}}
\toprule
\textbf{Dataset} & \textbf{Metric} & \textbf{AEN} & \textbf{Scikit EN} & \textbf{Scikit EN, 1-SE} & \textbf{Renet} & \textbf{Renet, 1-SE} \\ 
\midrule
\multirow{4}{*}{Abalone} & $R^2$ & 0.523 (0.010) & \textbf{0.525 (0.010)} & 0.510 (0.010) & \textbf{0.525 (0.010)} & 0.514 (0.010) \\
 & Num. Coeff.  & 7.0 (0.0) & 7.3 (0.1) & 6.2 (0.2) & 7.1 (0.1) & \textbf{5.5 (0.2)} \\
 & Relaxation ($\theta$)  & 1.0  & 1.0  & 1.0  & 0.1 (0.0) & 0.0 (0.0) \\
 & Time (s)  & 0.024 & 0.019 & 0.020 & \textbf{0.014} & \textbf{0.014} \\ \midrule

\multirow{4}{*}{BMI} & $R^2$ & \textbf{0.990 (0.001)} & 0.989 (0.001) & 0.988 (0.001) & 0.989 (0.001) & \textbf{0.990 (0.001)} \\
 & Num. Coeff.  & \textbf{2.0 (0.0)} & 6.6 (0.3) & 4.6 (0.2) & 4.5 (0.5) & 3.0 (0.0) \\
 & Relaxation ($\theta$)  & 1.0 & 1.0  & 1.0  & 0.2 (0.1) & 0.0 (0.0) \\
 & Time (s)  & 0.009 & \textbf{0.004} & \textbf{0.004} & 0.008 & 0.008 \\ \midrule

\multirow{4}{*}{College} & $R^2$ & 0.909 (0.008) & 0.893 (0.009) & 0.858 (0.009) & \textbf{0.911 (0.008)} & 0.901 (0.010) \\
 & Num. Coeff.  & \textbf{6.5 (0.3)} & 16.7 (0.1) & 16.9 (0.1) & 16.4 (0.3) & 8.7 (0.3) \\
 & Relaxation ($\theta$)  & 1.0  & 1.0  & 1.0  & 0.0 (0.0) & 0.0 (0.0) \\
 & Time (s)  & 0.011 & \textbf{0.005} & 0.006 & 0.034 & 0.033 \\ \midrule

\multirow{4}{*}{Crime} & $R^2$ & 0.663 (0.010) & \textbf{0.664 (0.009)} & 0.646 (0.010) & 0.663 (0.009) & 0.645 (0.010) \\
 & Num. Coeff.  & 54.7 (1.4) & 65.7 (2.8) & 13.9 (0.6) & 63.2 (3.7) & \textbf{8.7 (0.7)} \\
 & Relaxation ($\theta$)  & 1.0  & 1.0  & 1.0  & 0.8 (0.0) & 0.0 (0.0) \\
 & Time (s)  & \textbf{0.057} & 0.058 & 0.076 & 0.539 & 0.546 \\ \midrule

\multirow{4}{*}{Diabetes} & $R^2$ & 0.472 (0.021) & \textbf{0.473 (0.021)} & 0.446 (0.018) & 0.471 (0.021) & 0.443 (0.024) \\
 & Num. Coeff.  & 7.0 (0.2) & 8.8 (0.2) & 6.8 (0.2) & 8.1 (0.3) & \textbf{4.5 (0.6)} \\
 & Relaxation ($\theta$)  & 1.0  & 1.0  & 1.0  & 0.3 (0.1) & 0.0 (0.0) \\
 & Time (s)  & 0.010 & 0.006 & \textbf{0.005} & 0.014 & 0.014 \\ \midrule

\multirow{4}{*}{Doctor} & $R^2$ & \textbf{0.220 (0.013)} & \textbf{0.220 (0.012)} & 0.184 (0.007) & \textbf{0.220 (0.013)} & 0.186 (0.012) \\
 & Num. Coeff.  & 7.1 (0.4) & 9.4 (0.8) & 2.8 (0.1) & 7.2 (1.1) & \textbf{1.6 (0.1)} \\
 & Relaxation ($\theta$)  & 1.0  & 1.0  & 1.0  & 0.4 (0.1) & 0.0 (0.0) \\
 & Time (s)  & 0.022 & \textbf{0.008} & \textbf{0.008} & 0.016 & 0.014 \\ \midrule

\multirow{4}{*}{Parkinson} & $R^2$ & \textbf{0.168 (0.005)} & \textbf{0.168 (0.005)} & 0.159 (0.005) & \textbf{0.168 (0.005)} & 0.159 (0.006) \\
 & Num. Coeff.  & 17.1 (0.3) & 16.5 (0.3) & 12.3 (0.6) & 16.0 (0.3) & \textbf{10.3 (0.6)} \\
 & Relaxation ($\theta$)  & 1.0  & 1.0  & 1.0  & 0.5 (0.1) & 0.0 (0.0) \\
 & Time (s)  & 0.075 & \textbf{0.062} & 0.063 & 0.109 & 0.125 \\ \midrule

\multirow{4}{*}{Riboflavin} & $R^2$ & 0.520 (0.064) & 0.607 (0.046) & 0.608 (0.042) & \textbf{0.630 (0.045)} & 0.594 (0.051) \\
 & Num. Coeff.  & 123.6 (2.5) & 46.9 (2.8) & 25.2 (1.2) & 34.7 (2.9) & \textbf{17.5 (2.0)} \\
 & Relaxation ($\theta$)  & 1.0  & 1.0  & 1.0  & 0.6 (0.0) & 0.5 (0.0) \\
 & Time (s)  & 1.448 & 1.468 & 1.488 & 0.487 & \textbf{0.471} \\ \midrule

\multirow{4}{*}{Student} & $R^2$ & 0.843 (0.014) & 0.843 (0.014) & 0.816 (0.014) & \textbf{0.845 (0.013)} & 0.841 (0.014) \\
 & Num. Coeff.  & 4.2 (0.7) & 8.2 (1.5) & 2.0 (0.0) & 2.0 (0.0) & \textbf{1.0 (0.0)} \\
 & Relaxation ($\theta$)  & 1.0  & 1.0  & 1.0 & 0.0 (0.0) & 0.0 (0.0) \\
 & Time (s)  & 0.012 & \textbf{0.006} & \textbf{0.006} & 0.012 & 0.012 \\ \midrule

\multirow{4}{*}{Wine} & $R^2$ & \textbf{0.287 (0.005)} & \textbf{0.287 (0.005)} & 0.273 (0.005) & \textbf{0.287 (0.005)} & 0.274 (0.005) \\
 & Num. Coeff.  & 10.4 (0.1) & 11.0 (0.0) & 6.4 (0.5) & 10.9 (0.1) & \textbf{4.5 (0.4)} \\
 & Relaxation ($\theta$)  & 1.0  & 1.0  & 1.0  & 0.1 (0.0) & 0.1 (0.0) \\
 & Time (s)  & 0.021 & \textbf{0.010} & 0.012 & 0.018 & 0.014 \\
\bottomrule
\end{tabular}
\end{table}

\setlength{\tabcolsep}{4pt} 
\begin{table}[h!]
\centering
\caption{Synthetic 10-fold out-of-sample results with standard errors in parenthesis. Best metrics in bold.}
\label{tab:main_results_Synth}
\small 
\begin{tabular}{@{}l l c c c c c@{}}
\toprule
\textbf{Dataset} & \textbf{Metric} & \textbf{AEN} & \textbf{Scikit EN} & \textbf{Scikit EN, 1-SE} & \textbf{Renet} & \textbf{Renet, 1-SE} \\ 
\midrule
\multirow{4}{*}{S1} & $R^2$ & 0.932 (0.000) & \textbf{0.933 (0.000)} & 0.932 (0.000) & \textbf{0.933 (0.000)} & \textbf{0.933 (0.000)} \\
 & Num. Coeff.  & \textbf{10.0 (0.0)} & 15.1 (0.3) & \textbf{10.0 (0.0)} & 10.2 (0.1) & \textbf{10.0 (0.0)} \\
 & Relaxation ($\theta$)  & 1.0 & 1.0  & 1.0  & 0.0 (0.0) & 0.0 (0.0) \\
 & Time (s)  & 0.203 & \textbf{0.085} & 0.099 & 0.173 & 0.179 \\ \midrule

\multirow{4}{*}{S2} & $R^2$ & \textbf{0.862 (0.002)} & \textbf{0.862 (0.002)} & 0.860 (0.002) & \textbf{0.862 (0.002)} & \textbf{0.862 (0.002)} \\
 & Num. Coeff.  & 5.6 (0.1) & 17.4 (0.4) & 6.3 (0.3) & 13.9 (1.2) & \textbf{5.0 (0.0)} \\
 & Relaxation ($\theta$)  & 1.0  & 1.0  & 1.0  & 0.5 (0.1) & 0.0 (0.0) \\
 & Time (s)  & 0.023 & 0.013 & \textbf{0.011} & 0.020 & 0.020 \\ \midrule

\multirow{4}{*}{S3} & $R^2$ & \textbf{0.396 (0.011)} & 0.395 (0.011) & 0.381 (0.008) & 0.393 (0.011) & 0.394 (0.010) \\
 & Num. Coeff.  & 5.4 (0.2) & 5.0 (0.6) & 1.7 (0.2) & 3.6 (0.4) & \textbf{1.0 (0.0)} \\
 & Relaxation ($\theta$)  & 1.0  & 1.0  & 1.0  & 0.3 (0.1) & 0.0 (0.0) \\
 & Time (s)  & 0.017 & \textbf{0.007} & \textbf{0.007} & 0.013 & 0.015 \\ \midrule

\multirow{4}{*}{S4} & $R^2$ & \textbf{0.886 (0.003)} & 0.884 (0.003) & 0.881 (0.003) & 0.884 (0.003) & 0.883 (0.003) \\
 & Num. Coeff.  & 15.2 (0.6) & 24.5 (0.7) & 10.2 (0.5) & 13.1 (0.9) & \textbf{8.5 (0.3)} \\
 & Relaxation ($\theta$)  & 1.0  & 1.0  & 1.0  & 0.3 (0.0) & 0.2 (0.0) \\
 & Time (s)  & 0.023 & 0.012 & \textbf{0.011} & 0.052 & 0.051 \\ \midrule

\multirow{4}{*}{S5} & $R^2$ & 0.893 (0.002) & \textbf{0.895 (0.002)} & 0.893 (0.002) & \textbf{0.895 (0.002)} & 0.893 (0.002) \\
 & Num. Coeff.  & 75.9 (0.2) & 96.1 (0.5) & 84.2 (0.4) & 80.9 (0.6) & \textbf{71.3 (0.5)} \\
 & Relaxation ($\theta$)  & 1.0 & 1.0  & 1.0  & 0.0 (0.0) & 0.0 (0.0) \\
 & Time (s)  & 0.347 & 0.404 & 0.582 & 0.265 & \textbf{0.240} \\ \midrule

\multirow{4}{*}{S6} & $R^2$ & \textbf{0.760 (0.015)} & \textbf{0.760 (0.015)} & 0.752 (0.014) & 0.758 (0.015) & \textbf{0.760 (0.015)} \\
 & Num. Coeff.  & 5.2 (0.4) & 6.4 (0.4) & 2.9 (0.1) & 3.5 (0.4) & \textbf{2.0 (0.0)} \\
 & Relaxation ($\theta$)  & 1.0  & 1.0  & 1.0  & 0.1 (0.0) & 0.0 (0.0) \\
 & Time (s)  & 0.010 & \textbf{0.005} & \textbf{0.005} & 0.009 & 0.008 \\ \midrule

\multirow{4}{*}{S7} & $R^2$ & 0.849 (0.009) & 0.859 (0.006) & 0.852 (0.006) & 0.867 (0.007) & \textbf{0.869 (0.007)} \\
 & Num. Coeff.  & 54.6 (5.4) & 37.2 (1.5) & 22.9 (0.8) & 11.1 (1.4) & \textbf{8.7 (0.2)} \\
 & Relaxation ($\theta$)  & 1.0  & 1.0  & 1.0  & 0.0 (0.0) & 0.0 (0.0) \\
 & Time (s)  & \textbf{0.965} & 3.627 & 3.638 & 2.645 & 2.613 \\ \midrule

\multirow{4}{*}{S8} & $R^2$ & -0.019 (0.106) & 0.134 (0.065) & 0.105 (0.068) & \textbf{0.160 (0.077)} & 0.152 (0.067) \\
 & Num. Coeff.  & 167.4 (1.6) & 38.7 (3.3) & 16.5 (1.8) & 20.6 (2.0) & \textbf{8.5 (0.9)} \\
 & Relaxation ($\theta$)  & 1.0  & 1.0  & 1.0  & 0.5 (0.0) & 0.5 (0.0) \\
 & Time (s)  & 1.217 & 0.760 & 0.757 & 0.331 & \textbf{0.327} \\ \midrule

\multirow{4}{*}{S9} & $R^2$ & 0.900 (0.013) & 0.902 (0.014) & 0.896 (0.016) & \textbf{0.903 (0.015)} & 0.900 (0.014) \\
 & Num. Coeff.  & 70.6 (2.1) & 70.3 (6.8) & 34.4 (5.0) & 57.4 (8.1) & \textbf{21.6 (3.2)} \\
 & Relaxation ($\theta$)  & 1.0 & 1.0  & 1.0  & 0.7 (0.1) & 0.2 (0.1) \\
 & Time (s)  & \textbf{0.654} & 1.957 & 1.939 & 1.101 & 1.041 \\ \midrule

\multirow{4}{*}{S10} & $R^2$ & 0.734 (0.013) & 0.756 (0.012) & 0.744 (0.010) & \textbf{0.762 (0.013)} & 0.754 (0.014) \\
 & Num. Coeff.  & 261.9 (1.6) & 8.5 (0.6) & 5.8 (0.2) & 5.5 (0.1) & \textbf{3.6 (0.4)} \\
 & Relaxation ($\theta$)  & 1.0  & 1.0  & 1.0  & 0.0 (0.0) & 0.0 (0.0) \\
 & Time (s)  & 3.807 & 3.781 & 3.760 & 1.220 & \textbf{1.136} \\
\bottomrule
\end{tabular}
\end{table}

On the other hand, the extended results in Tables~\ref{tab:appendix_results_Real} and \ref{tab:appendix_results_Synth} (Appendix~\ref{ap:ExRes}) demonstrate that Stepwise OLS poses significant scalability challenges; it is consistently at least one to two orders of magnitude slower than the regularized benchmarks and frequently fails to converge within a 30-minute time limit in high-dimensional settings while the rest of algorithms finish in seconds (or less). Furthermore, these results serve to validate the base Renet engine: Renet without relaxation yields results numerically identical to state-of-the-art coordinate descent implementations like \texttt{Celer} in terms of $R^2$ and sparsity, while maintaining a significant speed advantage in low-to-moderate dimensional regimes. This is achieved through Renet's hybrid backend, which dynamically leverages \texttt{scikit-learn} for efficiency in standard regimes and switches to \texttt{Celer} as $p$ grows, effectively providing the ``best of both worlds'' in computational performance.

\clearpage

\section{Conclusion}
\label{sec:conclusion}

In the present work we have presented Renet, a principled,  stability-enhanced implementation of the Relaxed Elastic Net that systematically addresses the bias-variance conflict inherent in penalized linear regression. By rigorously decoupling variable selection from coefficient estimation, Renet allows the researcher to enjoy the benefits of strong regularization during the selection phase -- screening out noise and handling multicollinearity -- without suffering the predictive penalty of shrinkage bias in the final model. 

Our theoretical and empirical results highlight three distinct advantages of this framework.

First, a unique accuracy-sparsity synergy.  By relaxing the penalty on the active set, Renet fundamentally alters the calculus of the ``One-Standard-Error'' (1-SE) rule. Typically, researchers must give up predictive accuracy in exchange for a sparser, more interpretable model. Renet dramatically reduces this traditional sparsity tax. As demonstrated in Proposition \ref{prop:Hierarchy} and validated across our benchmarks (e.g., the \textit{Student} and \textit{Crime} datasets), Renet allows for ultra-sparse models that retain (and sometimes exceed) the accuracy of their denser counterparts.  The standard Elastic Net, lacking a relaxation mechanism,  cannot replicate these results, for whenever the 1-SE rule is applied the resulting model is less accurate and / or less sparse. Without the 1-SE rule, Renet consistently improves upon the standard Elastic Net solution, whether in terms of accuracy, sparsity or, often, both. With the 1-SE rule, Renet's performance is comparable to the Adaptive Elastic Net (AEN) in well-behaved regimes where the latter excels, while remaining substantially more stable in less ideal scenarios, as our next point discusses.

Second, robustness in the $p \geq n$, low signal-to-noise ratio (SNR) and high-multicollinearity regimes. Unlike the AEN or  stepwise OLS, which become fragile or computationally intractable as dimensionality increases, Renet is engineered for stability. By enforcing strict convexity through the $\ell_2$ penalty, guaranteeing KKT compliance through dynamic objective selection (sub-path refitting or OLS interpolation) and imposing structural guardrails -- such as the zero-relaxation path for saturated models and the theoretically-grounded relaxation floor in ultra high-dimensional settings  -- Renet prevents the estimator from entering the high-variance region of the OLS solution path.
    
Lastly, computational efficiency. By using warm-starts during sub-path refits, and implementing an adaptive solver backend that dynamically dispatches optimization tasks to either \texttt{scikit-learn} or \texttt{Celer} based on the topology of the data, Renet often matches or beats the execution time of state-of-the-art solvers, despite its greater statistical sophistication.

In summary, Renet offers a general-purpose upgrade to the standard Elastic Net that addresses the specific limitations of its predecessors: it maintains the stability that AEN lacks in low-SNR, high-dimensional regimes, while providing a level of computational scalability that is fundamentally unavailable to stepwise OLS. By decoupling selection from estimation, Renet uniquely harnesses the 1-SE rule to achieve extreme sparsity without the predictive penalty typically incurred by heavy regularization.

\bigskip

\bibliographystyle{plainnat}
\bibliography{references}

\clearpage
\appendix

\section{Renet Implementation Pseudocode}
\label{ap:Algo}

\begin{algorithm}
\caption{Renet}
\label{alg:renet}
\begin{algorithmic}[1]
\State \textbf{Input:} $\mathbf{X} \in \mathbb{R}^{n \times p}$, $\mathbf{y} \in \mathbb{R}^n$, $\ell_1$-ratio $\alpha$, $\lambda$-grid $\Lambda$, $\theta$-grid $\Theta$, Folds $K$.
\State \textbf{Step 1: Joint Cross-Validation}
\For{each fold $k \in \{1, \dots, K\}$}
    \State $\hat{\mathbf{B}}_{EN} \gets$ Solve path on $\mathbf{X}_{train}$ for all $\lambda \in \Lambda$
    \State \If{$p \gg n_{train}$} $\theta_{floor} \gets \min(1.0, \log p / (2\sqrt{n_{train}}))$ \Comment{UHD complexity safeguard} \EndIf
    \For{each $\lambda_i \in \Lambda$}
        \State Identify active set $\mathcal{A}_{i} = \{j : |\hat{\beta}_{EN,j}(\lambda_i)| > 0 \}$
        \State $p_{active} \gets |\mathcal{A}_{i}|$
        \For{each $\theta_j \in \Theta$}
            \State $\theta_{eff} \gets \max(\theta_j, \theta_{floor})$ \If{$p_{active} \ge n_{train}$} $\theta_{eff} \gets 1.0$ \Comment{Saturation regime} \EndIf
            \State $\hat{\beta}_{Renet} \gets$ \Call{RelaxSolver}{$\mathbf{X}_{\mathcal{A}_i}, \mathbf{y}_{train}, \lambda_i, \theta_{eff}$}
            \State $MSE(i, j, k) \gets \|\mathbf{y}_{test} - \mathbf{X}_{test}\hat{\beta}_{Renet}\|_2^2$
        \EndFor
    \EndFor
\EndFor
\State $(\lambda^*, \theta^*) \gets$ Select best pair from $\text{avg}(MSE)$ \Comment{Optionally apply 1-SE rule here}

\State \textbf{Step 2: Final Model Estimation}
\State $\hat{\mathbf{B}}_{full} \gets$ Solve path on full data for all $\lambda \in \Lambda$
\State $\mathcal{A}^* \gets$ Active set from $\hat{\beta}_{full}(\lambda^*)$, $\theta_{floor} \gets \min(1.0, \log p / (2\sqrt{n}))$, $\theta_{final} \gets \max(\theta^*, \theta_{floor})$
\If{$|\mathcal{A}^*| \ge n$} $\theta_{final} \gets 1.0$ \EndIf
\State \Return $\hat{\beta}_{final} \gets$ \Call{RelaxSolver}{$\mathbf{X}_{\mathcal{A}^*}, \mathbf{y}, \lambda^*, \theta_{final}$}

\State
\Function{RelaxSolver}{$\mathbf{X}_{\mathcal{A}}, \mathbf{y}, \lambda, \theta$}
    \If{$\theta = 1.0$} \Return $\hat{\beta}_{EN}(\lambda)$ \EndIf
    \State $\hat{\beta}_{OLS} \gets (\mathbf{X}_{\mathcal{A}}^T \mathbf{X}_{\mathcal{A}})^{-1} \mathbf{X}_{\mathcal{A}}^T \mathbf{y}$ 
    \If{$\text{sgn}(\hat{\beta}_{EN}) = \text{sgn}(\hat{\beta}_{OLS})$}
        \State \Return $\theta \hat{\beta}_{EN} + (1-\theta) \hat{\beta}_{OLS}$ \Comment{Efficient convex blending}
    \Else \Comment{Sub-path refit}
    		\If{$p_{active} \gg n$} \Return Optimize Renet Objective with Celer on $\mathbf{X}_{\mathcal{A}}$ with $\lambda_{relax} = \theta \lambda$
    		\Else
        		\State \Return Optimize Renet Objective with vanilla CD on $\mathbf{X}_{\mathcal{A}}$ with $\lambda_{relax} = \theta \lambda$
        \EndIf
    \EndIf
\EndFunction
\end{algorithmic}
\end{algorithm}

\pagebreak

\section{Adaptive Elastic Net Implementation Pseudocode}
\label{ap:Algo_AEN}

\begin{algorithm}
\caption{Adaptive Elastic Net (AEN)}
\label{alg:adaptive_net}
\begin{algorithmic}[1]
\State \textbf{Input:} Data matrix $X \in \mathbb{R}^{n \times p}$, Target vector $y \in \mathbb{R}^n$. Parameters: $\gamma$ (adaptive power), l1\_ratio,  $K$ (cv folds).

\State \textbf{Step 1: Pilot Estimation (Ridge)}
\State Check and standardize $X, y$
\State $\hat{\beta}^{\text{ridge}} \gets \text{RidgeCV}(X, y, \text{cv}=K)$ \Comment{Obtain stable pilot estimates}

\State \textbf{Step 2: Compute Adaptive Weights}
\For{$j = 1$ to $p$}
    \State $w_j \gets \left( |\hat{\beta}^{\text{ridge}}_j|^\gamma + \epsilon_{\text{tol}} \right)^{-1}$ \Comment{Invert pilot coefs (e.g., $\epsilon_{\text{tol}}=10^{-12}$)}
\EndFor

\State \textbf{Step 3: Transform Feature Space}
\State $\tilde{X} \gets X$
\For{$j = 1$ to $p$}
    \State $\tilde{X}_{:,j} \gets X_{:,j} / w_j$ \Comment{Scale features by inverse weights}
\EndFor

\State \textbf{Step 4: Weighted Optimization (ElasticNet)}
\State $\hat{\beta}^{\text{raw}} \gets \text{ElasticNetCV}(\tilde{X}, y, \text{l1\_ratio}, \text{cv}=K)$

\State \textbf{Step 5: Rescale Coefficients}
\For{$j = 1$ to $p$}
    \State $\hat{\beta}_j \gets \hat{\beta}^{\text{raw}}_j / w_j$ \Comment{Transform back to original scale}
\EndFor
\State $\hat{\beta}_0 \gets \hat{\beta}^{\text{raw}}_0$ \Comment{Intercept remains unchanged}

\State \Return  $\hat{\beta}$
\end{algorithmic}
\end{algorithm}

\pagebreak

\section{Extended Results}
\label{ap:ExRes}

\begin{table}[ht]
\centering
\caption{Real-world out-of-sample results with standard errors in parenthesis. Best metrics shown in bold. DFN means the model ``Did Not Finish" within a 30-minute limit.}
\label{tab:appendix_results_Real}
\small
\begin{tabular}{@{}l l c c c c@{}}
\toprule
\textbf{Dataset} & \textbf{Metric} & \textbf{Stepwise OLS} & \textbf{Celer EN} & \textbf{Celer EN, 1-SE} & \textbf{Renet w/o relaxation} \\ 
\midrule
\multirow{3}{*}{Abalone} & $R^2$ & 0.519 (0.010) & \textbf{0.525 (0.010)} & 0.510 (0.010) & \textbf{0.525 (0.010)} \\
 & Num. Coeff.  & \textbf{6.1 (0.1)} & 7.3 (0.1) & 6.2 (0.2) & 7.3 (0.1) \\
 & Time (s)  & 0.154 & 0.130 & 0.131 & \textbf{0.016} \\ \midrule

\multirow{3}{*}{BMI} & $R^2$ & \textbf{0.990 (0.001)} & 0.989 (0.001) & 0.988 (0.001) & 0.989 (0.001) \\
 & Num. Coeff.  & \textbf{4.4 (0.5)} & 6.4 (0.3) & 4.6 (0.2) & 6.5 (0.3) \\
 & Time (s)  & 0.108 & 0.005 & 0.005 & \textbf{0.004} \\ \midrule

\multirow{3}{*}{College} & $R^2$ & \textbf{0.914 (0.007)} & 0.893 (0.009) & 0.858 (0.009) & 0.893 (0.009) \\
 & Num. Coeff.  & \textbf{9.7 (0.3)} & 16.7 (0.1) & 16.9 (0.1) & 16.7 (0.1) \\
 & Time (s)  & 0.335 & 0.014 & 0.015 & \textbf{0.005} \\ \midrule

\multirow{3}{*}{Crime} & $R^2$ & 0.656 (0.010) & \textbf{0.664 (0.010)} & 0.646 (0.010) & \textbf{0.664 (0.009)} \\
 & Num. Coeff.  & 49.4 (1.7) & 65.9 (2.7) & \textbf{13.9 (0.6)} & 65.7 (2.8) \\
 & Time (s)  & 86.974 & 0.905 & 0.922 & \textbf{0.063} \\ \midrule

\multirow{3}{*}{Diabetes} & $R^2$ & 0.469 (0.022) & \textbf{0.473 (0.021)} & 0.446 (0.018) & \textbf{0.473 (0.021)} \\
 & Num. Coeff.  & \textbf{6.1 (0.3)} & 8.8 (0.2) & 6.8 (0.2) & 8.8 (0.2) \\
 & Time (s)  & 0.096 & 0.014 & 0.015 & \textbf{0.006} \\ \midrule

\multirow{3}{*}{Doctor} & $R^2$ & \textbf{0.221 (0.013)} & 0.220 (0.012) & 0.184 (0.007) & 0.220 (0.012) \\
 & Num. Coeff.  & 7.3 (0.3) & 9.6 (1.0) & \textbf{2.8 (0.1)} & 9.4 (0.8) \\
 & Time (s)  & 0.508 & 0.067 & 0.063 & \textbf{0.007} \\ \midrule

\multirow{3}{*}{Parkinson} & $R^2$ & 0.166 (0.005) & \textbf{0.168 (0.005)} & 0.159 (0.005) & \textbf{0.168 (0.005)} \\
 & Num. Coeff.  & 14.3 (0.4) & 16.5 (0.3) & \textbf{12.3 (0.7)} & 16.5 (0.3) \\
 & Time (s)  & 1.199 & 1.049 & 1.052 & \textbf{0.054} \\ \midrule

\multirow{3}{*}{Riboflavin} & $R^2$ & DNF  & \textbf{0.608 (0.050)} & 0.607 (0.041) & 0.601 (0.055) \\
 & Num. Coeff.  & DNF  & 46.7 (2.1) & \textbf{25.1 (1.2)} & 49.7 (2.5) \\
 & Time (s)  & DNF & 0.320 & \textbf{0.282} & 0.307 \\ \midrule

\multirow{3}{*}{Student} & $R^2$ & 0.837 (0.014) & \textbf{0.843 (0.014)} & 0.816 (0.014) & \textbf{0.843 (0.014)} \\
 & Num. Coeff.  & 10.6 (0.6) & 8.2 (1.5) & \textbf{2.0 (0.0)} & 8.2 (1.5) \\
 & Time (s)  & 1.183 & 0.018 & 0.019 & \textbf{0.005} \\ \midrule

\multirow{3}{*}{Wine} & $R^2$ & \textbf{0.287 (0.005)} & \textbf{0.287 (0.005)} & 0.273 (0.005) & \textbf{0.287 (0.005)} \\
 & Num. Coeff.  & 10.2 (0.2) & 11.0 (0.0) & \textbf{6.4 (0.5)} & 11.0 (0.0) \\
 & Time (s)  & 0.352 & 0.158 & 0.155 & \textbf{0.010} \\
\bottomrule
\end{tabular}
\end{table}

\begin{table}[ht]
\centering
\caption{Synthetic out-of-sample results with standard errors in parenthesis. Best metrics shown in bold. DFN means the model ``Did Not Finish" within a 30-minute limit.}
\label{tab:appendix_results_Synth}
\small
\begin{tabular}{@{}l l c c c c@{}}
\toprule
\textbf{Dataset} & \textbf{Metric} & \textbf{Stepwise OLS} & \textbf{Celer EN} & \textbf{Celer EN, 1-SE} & \textbf{Renet w/o relaxation} \\ 
\midrule
\multirow{3}{*}{S1} & $R^2$ & \textbf{0.933 (0.000)} & \textbf{0.933 (0.000)} & 0.932 (0.000) & \textbf{0.933 (0.000)} \\
 & Num. Coeff.  & 12.0 (0.2) & 15.1 (0.3) & \textbf{10.0 (0.0)} & 15.1 (0.3) \\
 & Time (s)  & 12.248 & 2.786 & 2.825 & \textbf{0.061} \\ \midrule

\multirow{3}{*}{S2} & $R^2$ & \textbf{0.862 (0.002)} & \textbf{0.862 (0.002)} & 0.860 (0.002) & \textbf{0.862 (0.002)} \\
 & Num. Coeff.  & 11.3 (0.4) & 17.3 (0.5) & \textbf{6.2 (0.3)} & 17.4 (0.5) \\
 & Time (s)  & 1.142 & 0.078 & 0.080 & \textbf{0.009} \\ \midrule

\multirow{3}{*}{S3} & $R^2$ & 0.393 (0.011) & \textbf{0.395 (0.011)} & 0.381 (0.008) & \textbf{0.395 (0.011)} \\
 & Num. Coeff.  & 5.7 (0.4) & 5.0 (0.6) & \textbf{1.7 (0.2)} & 5.0 (0.6) \\
 & Time (s)  & 0.616 & 0.065 & 0.058 & \textbf{0.007} \\ \midrule

\multirow{3}{*}{S4} & $R^2$ & 0.882 (0.004) & \textbf{0.884 (0.003)} & 0.881 (0.003) & \textbf{0.884 (0.003)} \\
 & Num. Coeff.  & 26.4 (1.2) & 24.5 (0.7) & \textbf{10.2 (0.5)} & 24.6 (0.7) \\
 & Time (s)  & 50.983 & 0.066 & 0.068 & \textbf{0.012} \\ \midrule

\multirow{3}{*}{S5} & $R^2$ & \textbf{0.895 (0.002)} & \textbf{0.895 (0.002)} & 0.893 (0.002) & \textbf{0.895 (0.002)} \\
 & Num. Coeff.  & \textbf{82.2 (0.5)} & 96.2 (0.5) & 84.3 (0.4) & 96.1 (0.5) \\
 & Time (s)  & 147.269 & 3.182 & 3.303 & \textbf{0.425} \\ \midrule

\multirow{3}{*}{S6} & $R^2$ & 0.757 (0.015) & \textbf{0.760 (0.015)} & 0.752 (0.014) & \textbf{0.760 (0.015)} \\
 & Num. Coeff.  & 5.7 (0.3) & 6.4 (0.4) & \textbf{2.9 (0.1)} & 6.4 (0.4) \\
 & Time (s)  & 0.385 & 0.009 & 0.009 & \textbf{0.005} \\ \midrule

\multirow{3}{*}{S7} & $R^2$ & 0.698 (0.020) & \textbf{0.859 (0.006)} & 0.852 (0.006) & \textbf{0.859 (0.006)} \\
 & Num. Coeff.  & 98.1 (2.3) & 37.5 (1.6) & \textbf{22.7 (0.7)} & 37.9 (1.5) \\
 & Time (s)  & 1075.221 & 2.896 & \textbf{2.883} & 2.927 \\ \midrule

\multirow{3}{*}{S8} & $R^2$ & DNF & \textbf{0.141 (0.065)} & 0.105 (0.068) & 0.080 (0.099) \\
 & Num. Coeff.  & DNF  & 38.5 (3.0) & \textbf{16.5 (1.8)} & 45.1 (4.2) \\
 & Time (s)  & DNF & 0.203 & \textbf{0.199} & \textbf{0.199} \\ \midrule

\multirow{3}{*}{S9} & $R^2$ & 0.850 (0.020) & \textbf{0.904 (0.014)} & 0.895 (0.015) & \textbf{0.904 (0.015)} \\
 & Num. Coeff.  & 76.0 (1.8) & 69.0 (6.2) & \textbf{33.0 (4.7)} & 70.2 (5.7) \\
 & Time (s)  & 188.564 & 1.682 & 1.689 & \textbf{1.678} \\ \midrule

\multirow{3}{*}{S10} & $R^2$ & DNF  & \textbf{0.756 (0.012)} & 0.744 (0.010) & \textbf{0.756 (0.012)} \\
 & Num. Coeff.  & DNF  & 8.5 (0.6) & \textbf{5.8 (0.2)} & 8.5 (0.6) \\
 & Time (s)  & DNF & \textbf{1.158} & 1.170 & 1.245 \\
\bottomrule
\end{tabular}
\end{table}

\end{document}